\newtheorem{theorem}{Theorem}
\newcommand{\PP}{\mathbb{P}}
\newcommand{\E}{\mathbb{E}}
\newcommand{\sir}{{\rm SIR}}
\newcommand{\sinr}{{\rm SINR}}
\newcommand{\dd}{{\rm d}}
\newcommand{\TR}{T_{\mathrm{FR}}}
\begin{document}

\title{Analytical Evaluation of Fractional Frequency Reuse for Heterogeneous Cellular Networks}

\author{\IEEEauthorblockN{Thomas D. Novlan, Radha Krishna Ganti, Arunabha Ghosh, Jeffrey G. Andrews\\}
}

\maketitle
\pagestyle{fancy}
\thispagestyle{empty}
\fancyhf{} 
\fancyhead[R]{\thepage}
\renewcommand{\headrulewidth}{0pt}
\renewcommand{\footrulewidth}{0pt}

\begin{abstract}
Interference management techniques are critical to the performance of heterogeneous cellular networks, which will have dense and overlapping coverage areas, and experience high levels of interference. Fractional frequency reuse (FFR) is an attractive interference management technique due to its low complexity and overhead, and significant coverage improvement for low-percentile (cell-edge) users. Instead of relying on system simulations based on deterministic access point locations, this paper instead proposes an analytical model for evaluating Strict FFR and Soft Frequency Reuse (SFR) deployments based on the spatial Poisson point process. Our results both capture the non-uniformity of heterogeneous deployments and produce tractable expressions which can be used for system design with Strict FFR and SFR. We observe that the use of Strict FFR bands reserved for the users of each tier with the lowest average $\sinr$ provides the highest gains in terms of coverage and rate, while the use of SFR allows for more efficient use of shared spectrum between the tiers, while still mitigating much of the interference. Additionally, in the context of multi-tier networks with closed access in some tiers, the proposed framework shows the impact of cross-tier interference on closed access FFR, and informs the selection of key FFR parameters in open access.
\end{abstract}

\let\thefootnote\relax\footnotetext{T. D. Novlan, R. K. Ganti, and J. G. Andrews are with the Wireless Networking and Communications Group, the University of Texas at Austin. A. Ghosh is with AT\&T Laboratories. The contact author is J. G. Andrews. Email: jandrews@ece.utexas.edu. This research has been supported by AT\&T Laboratories. Date revised: June 23, 2011} 

\linespread{2}
\section{\label{sec:intro} Introduction}
Modern cellular network deployments are currently transitioning from largely homogeneous (one-tier) voice-centric deployments to highly heterogeneous data-centric networks comprised of different classes (tiers) of access points \cite{Qualcomm2011}. These include operator-deployed picocells and distributed antenna systems \cite{Pico2007,Saleh1987,Zhang2008}, and home user-deployed femtocells \cite{Chand2008}. 

Performance analysis of these networks is much more involved than for a single-tier network because of the need to account for inter-cell and cross-tier interference and the non-uniformity of the access point deployments arising from both topographic and economic reasons. A further complication in heterogeneous network analysis arises from different user association policies. As a result, there is a need for new and general models for analyzing the important metrics of coverage and rate in the context of these multi-tier networks. While prior work has relied on simulations based on deterministic models of AP locations, these have not led to general or tractable solutions. In this paper, instead, we model the AP locations as a Poisson point process (PPP) \cite{Baccelli1995,Brown2000,Haenggi2009}. This modeling approach has been recently applied to the analysis of cellular networks due to the ability to derive tractable expressions for coverage and rate both for one-tier \cite{Andrews2010} and very recently, heterogeneous networks \cite{Dhillon2011,ITA2011,MukherjeeLTE2011,MukherjeeWCDMA2011}. 

\subsection{Fractional Frequency Reuse}
Faced with increased traffic demands in interference-limited cellular networks, fractional frequency reuse (FFR) is an attractive strategy due to its low complexity of implementation and its significant gains for the bottom percentile of mobile users. Recently, FFR has been included in fourth generation (4G) wireless standards including WiMAX 2 (802.16m) and 3GPP-LTE since release 8 \cite{Intel2010}. This work extends our novel analytical model of FFR in the downlink of a cellular network with a single-tier of base stations using the PPP model developed in \cite{Novlan2011,NovlanGCOM2011} to a general multi-tier network with closed and open access between the tiers. This allows the development of tractable expressions for the $\sinr$ distributions to be derived as a function of the FFR parameters which can be utilized for the system design of these networks. 

We will consider the two most common types of FFR: \textit{Strict FFR} and \textit{Soft Frequency Reuse} (SFR). Under Strict FFR, which extends the traditional frequency reuse used extensively in current cellular networks \cite{Begain2002,Sternad2003}, users in the interior of a cell are allocated a common sub-band of frequencies $f_c$ while at the cell-edge, users are allocated separate subbands partitioned across cells with a reuse factor of $\Delta$.  The left sub-figure in Fig. \ref{fig:FFRmodels}(a) illustrates potential Strict FFR allocations with $\Delta = 3$ in which edge users are given frequency resources corresponding to subbands $f_1, f_2$, or $f_3$. The primary advantage of Strict FFR is the significant reduction in interference for edge users, although there is a loss in spectral efficiency since each cell cannot fully utilize all $\Delta + 1$ subbands \cite{Novlan2010}.

The right sub-figure in Fig. \ref{fig:FFRmodels}  illustrates the frequency and transmit power allocation for SFR. Edge users are allocated bandwidth subbands with a reuse factor of $\Delta$, but the main difference vs. Strict FFR is that each cell utilizes all $\Delta$ subbands since interior users are allowed to share sub-bands with edge users in other cells. Because cell-edge users share the bandwidth with neighboring cells, their downlinks are typically transmitted with higher power levels in order to reduce the impact of the inter-cell interference \cite{Li1999,Huawei2005}. To accomplish this, a transmit power control factor $\beta \geq 1$ is introduced to create two different classes, $P_{\rm int} = P$ and $P_{\rm edge} = \beta P$, where $P_{\rm int}$ is the transmit power of the base station if user $y$ is an interior user and $P_{\rm edge}$ is the transmit power of the base station if user $y$ is a cell-edge user. The increased interference for edge users under SFR is traded off for greater spectral utilization \cite{Doppler2009}. 

\subsection{Related Work}
Early work on frequency partitioning for two-tier networks is found in \cite{Chand2009}. Their proposed strategy maximizes the spectral efficiency for a minimum QoS requirement and the number of users per tier. They assume that the femtocells are given a separate frequency band from the macrocells, such that there is no cross-tier interference. 

The authors in \cite{HeuiLee2010} consider an adaptive FFR strategy for mitigating inter-femtocell interference while keeping spectral efficiency as high as possible. They vary the size of FFR partitions and transmit power based on the amount of estimated interference. However they use a deterministic model for the femtocells inside of a single building and neglect macrocell or femtocell interference outside of the building. Very recent work in \cite{JuLee2011} considers a deterministic model analysis of the spectral efficiency of femtocells as a function of the femtocell's location in a two-tier network with base stations modeled as a hexagonal grid and femtocells uniformly deployed in each cell. They fix the macrocell FFR sub-band allocations and then consider the spectral efficiency of a femtocell as a function of its distance from the cell center. 

Frequency partitioning between macrocells and femtocells is revisited in \cite{AndrewsM2010}. They propose a model where some sub-bands are reserved for only macrocell or femtocell users in addition to a common group of sub-bands, similar in concept to the proposed Strict FFR model. They also alternately consider partitioning in the time domain. They provide a large number of simulation results based on a deterministic model for the AP locations and motivate a dynamic partitioning based on measured interference levels by users in either tier. 

The two primary user association policies for heterogeneous networks are \textit{closed access} and \textit{open access}. Under closed access, mobiles are restricted from connecting with certain tiers of access points based on system performance metrics or economic or legal factors in some cases \cite{Chand2008}. Open access instead allows users to connect to APs of different tiers based on the association policy, which may be measured signal-to-interference-ratio ($\sir$) or traffic load and can be used as an interference management technique \cite{Xia2010}. The authors in \cite{HanShin2011} consider performance tradeoffs for closed and open femtocell networks. Their analysis uses stochastic geometry tools from \cite{Andrews2010} in order to derive $\sinr$ distributions for different deployment scenarios at the cell edge or interior and for varying femtocell densities. However their analysis is constrained to the interior of a single macrocell and does not consider the effect of inter-cell interference or the use of FFR on the $\sinr$ distributions.

\subsection{Contributions}
In this paper we present the following contributions. First, we extend the framework of \cite{Novlan2011,NovlanGCOM2011} to evaluate the $\sinr$ distributions for users in a downlink $K$-tier network utilizing Strict FFR and SFR. We first consider closed access, which limits users to associate with APs in only one tier, with all the other tiers contributing interference. In addition, by considering a special case relevant to interference-limited networks, the analytical expressions for the $\sinr$ distributions reduce to simple expressions which are a function of the key FFR design parameters, allowing for clear, intuitive comparisons between the reuse strategies and insight into system design. Secondly, we propose a new framework for analyzing coverage for the open access downlink under Strict FFR and SFR in which users may associate with APs in more than one tier. Finally, we provide implications of the analysis to system design for closed and open access networks. The models allow for investigation of FFR parameter selection based on the densities, transmit powers, and resource allocation strategies of the tiers. In the next section, we provide a detailed description of the system model and our assumptions.
 
\section{\label{sec:model}System Model}
We consider an OFDMA cellular downlink with K-tiers of access points (APs). The locations of the base stations and femtocells are modeled as independent spatial Poisson point processes (PPP) \cite{Stoyan1996} of density $\lambda_k$ with independence between the tiers. In other words, for a given PPP, the number of points in a bounded area is a Poisson-distributed random variable and those points are uniformly-distributed within the area. A realization of a three-tier network with Poisson distributed APs and Voronoi cell coverage regions based on strongest received power is given in Fig. \ref{fig:multiTier}. 

Without loss of generality, we assume a typical mobile user at the origin and compute the $\sinr$ for this typical mobile. We assume that the mobile user is served by only one tier at a time and by the closest AP of that tier, which is at a distance $r_k$. Since the underlying APs are distributed as PPPs, it follows that $r_k$ is Rayleigh distributed \cite{Stoyan1996}. We assume that all the access points of the $k^{th}$ transmit with an equal power $P_k$. The path loss exponent is given by $\alpha$, and ${\sigma}^2$ is the noise power. We assume that the  small-scale fading between any interfering AP and the typical mobile in consideration, denoted by $G_z$, is i.i.d exponentially distributed with mean $\mu$ (corresponds to Rayleigh fading). The set of interfering APs in the $k^{th}$ tier is $\mathcal{Z}_k$, i.e. access points that use the same sub-band as the mobile user. We denote the distance between the interfering AP and the mobile node in consideration by $R_z$. 

The associated signal-to-interference-plus-noise-ratio ($\sinr$) is given as
\begin{equation}
\label{eq:SINR}
\sinr = \frac{P_kg_k{r_k}^{-\alpha}}{{\sigma}^2 + \displaystyle\sum_{k=1}^{K}P_kI_k},
\end{equation}
where for an interfering set of $k^{th}$ tier APs $\mathcal{Z}_k$,
\begin{equation}
\label{eq:FFRIR}
I_k = \displaystyle\sum_{z\in \mathcal{Z}_k}{G_z{R_z}^{-\alpha}}. 
\end{equation}
In the above expression, we have assumed that the nearest AP to the mobile in the $k^{th}$ tier is at a distance $r_k$, which is a random variable. Also the fading between the nearest AP in consideration is denoted by $g_k$.

With FFR, a mobile user first determines its $\sinr$ to the nearest AP of the $k^{th}$ tier and checks if it is less than the tier's FFR threshold $T_k$. If so, then the user is classified as an \textit{edge} user and the AP transmits its downlink on the reserved FFR band, randomly picked from $\Delta$ sub-bands available. Otherwise we classify the mobile as an \textit{interior} user. These classifications arise differently than prior work utilizing the typical grid model assumption which defines an interior radius \cite{Novlan2010}, since constant $\sinr$ contours can no longer be defined as concentric circles around the AP \cite{Hernandez2009}. In fact the edge or interior user classifications does not necessarily have the same geographic interpretation for each cell. As noted in \cite{Brown2000}, this consequence of the spatial PPP more closely reflects non-regular deployments and typically corresponds to a lower performance bound compared to the upper bound provided by the grid model.

To accommodate the difference between SFR and Strict FFR in terms of the use of power control, we introduce the design parameter $\beta$. Typical ranges for $\beta$ are 0-20 dB \cite{Doppler2009,Mazin2010}. Since this extra downlink power is only applied to $1/\Delta$ of the base stations on the first tier the interference power is given by $\eta P_1I_1 + \sum_{k=2}^{K}P_kI_k$, where $\eta = (\Delta - 1  + \beta)/\Delta$ consolidates the edge and interior downlinks into a single effective interference term.

\section{\label{sec:ClosedCoverage} Coverage Probability with Closed Access}
We initially consider coverage probability the downlink of a multi-tier network with closed access between the tiers. For example, in the context of a two-tier network with underlaid femtocells, a mobile user connected to the macrocell may be in range of a femtocell, but is unable to connect to that femtocell, potentially resulting in cross-tier interference.

Coverage probability is the probability that a user's $\sinr$ is greater than a threshold $T$,
\begin{equation}
\label{eq:coverage}
\bar{F}(T) = \PP(\sinr > T),   
\end{equation}
equivalently the CCDF of the $\sinr$ for a particular reuse strategy, denoted as $\bar{F}(T)$. 

\subsection{Single-tier coverage with FFR}
Our prior results in \cite{Novlan2011,NovlanGCOM2011} take advantage of the framework recently developed in \cite{Andrews2010} utilizing the Poisson point process (PPP) model for base station locations. The authors of \cite{Andrews2010} determine expressions for the exact distribution of the typical mobile's $\sinr$, with traditional per-cell frequency reuse for a single-tier of base stations. As a result, under reasonable assumptions for modern cellular networks, the results in \cite{Novlan2011} reduce to tractable expressions which provide insight into system design guidelines and the relative merits of Strict FFR and SFR, compared to universal reuse for a two-tier network with open access between tiers. Also in \cite{Novlan2011,NovlanGCOM2011}, the shape and values of the distributions derived for Strict FFR and SFR are shown to be closer to results obtained using location data from an actual base station deployment than simulations utilizing the standard grid model. We now provide the distribution of $\sinr$ for cell-edge users with Strict FFR and SFR under closed access.

\subsection{Multi-tier coverage with Strict FFR} 
In the case of Strict FFR, we assume that inter-cell and cross-tier interference is present on the common sub-band allocated to all macrocells, while the FFR sub-band is reserved for macrocell users and does not experience cross-tier interference, only inter-cell interference thinned with a reuse factor of $\Delta$. First tier edge users are those who have $\sinr$ less than the macrocell's FFR threshold $T_1$ on the common sub-band shared by all cells and are therefore selected by the reuse strategy to have a new sub-band allocated to them from the $\Delta$ total available sub-bands reserved for the edge users. 

\begin{theorem}[Strict FFR, closed access, edge user]
\label{thm:ClosedFFRedge}
The coverage probability of a first tier edge user in a strict FFR system, assigned a FFR sub-band is 
\begin{equation}
\bar{\mathrm{F}}_{\mathrm{FFR,c}}(T)=\frac{ \pi\lambda_1 \int_0^\infty e^{-\pi \lambda_1 v \left(1+\frac{\rho({T},\alpha)}{\Delta}\right)-\mu T \frac{\sigma^2}{P_1} v^{\alpha/2}} - e^{-\pi \lambda_1 v \left(1+2 \xi(T,{T_1},\alpha,\Delta)+2\sum_{k=2}^{K}\kappa_k\psi(\gamma_k{T_1},\alpha)\right)-\mu (T+{T_1}) \frac{\sigma^2}{P_1} v^{\alpha/2}} \dd v }{1-\pi \lambda_1 \int_0^\infty e^{-\pi \lambda_1 v \left(1+\rho({T_1},\alpha)+2\sum_{k=2}^{K}\kappa_k\psi(\gamma_k{T_1},\alpha)\right)} e^{-\mu (T+{T_1}) \frac{\sigma^2}{P_1} v^{\alpha/2}} \dd v},
\label{eq:FFRclosed}
\end{equation}
\begin{equation}
\label{eq:rho}
\textrm{where} ~\rho(z,\alpha) = z^{2/\alpha}\int_{z^{-2/\alpha}}^{\infty}\frac{1}{1+u^{\alpha/2}}\dd u,
\end{equation}
\begin{equation}
\label{eq:xi}
\xi(T,{T_1},\alpha,\Delta)= \int_{r_1}^\infty\left[1-\frac{1}{1+{T_1}{r_1}^\alpha x^{-\alpha}}\left(1-\frac{1}{\Delta}\left(1-\frac{1}{1+T {r_1}^\alpha x^{-\alpha}}\right) \right) \right] x \dd x,
\end{equation}
\begin{equation}
\label{eq:terms}
\textrm{and} ~ \psi(z,\alpha) = \textrm{csc}\left(\frac{2\pi}{\alpha}\right)\frac{\pi z^{2/\alpha}}{\alpha}~,~\gamma_k = \frac{P_k}{P_1}~,~\kappa_k = \frac{\lambda_k}{\lambda_1}.
\end{equation}
\end{theorem}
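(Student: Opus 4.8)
The plan is to recognize $\bar{\mathrm{F}}_{\mathrm{FFR,c}}(T)$ as a conditional coverage probability and to split it into marginal terms that fall to the standard PPP technique of \cite{Andrews2010} plus one genuinely joint term. Write $\sinr_c$ for the $\sinr$ a tier-one user sees on the common sub-band (full inter-cell and cross-tier interference) and $\sinr_f$ for the $\sinr$ on the reserved edge sub-band ($1/\Delta$-thinned tier-one interference, no cross-tier). A user is assigned an FFR band precisely when $\sinr_c < T_1$, so the quantity of interest is $\PP(\sinr_f > T \mid \sinr_c < T_1)$. First I would write this as the ratio $\PP(\sinr_f > T,\ \sinr_c < T_1)/\PP(\sinr_c < T_1)$ and expand the numerator by inclusion--exclusion as $\PP(\sinr_f > T) - \PP(\sinr_f > T,\ \sinr_c > T_1)$, with denominator $1 - \PP(\sinr_c > T_1)$.

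Next I would evaluate the two marginal coverage probabilities $\PP(\sinr_f > T)$ and $\PP(\sinr_c > T_1)$. Conditioning on the nearest-AP distance $r_1$ (Rayleigh, so that $v = r_1^2$ carries density $\pi\lambda_1 e^{-\pi\lambda_1 v}$), the exponential law of the desired fading $g_1$ turns each CCDF into a product of a noise factor $e^{-\mu T \sigma^2 v^{\alpha/2}/P_1}$ and the Laplace transforms of the independent interference fields. Applying the PPP probability generating functional, the tier-one field (interferers at distance $>r_1$) produces the $\rho$ integral of \eqref{eq:rho} --- carrying the factor $1/\Delta$ on the reserved band because only a fraction $1/\Delta$ of cells reuse the sub-band --- while each cross-tier field, integrated over the whole plane under closed access, produces $\psi(\gamma_k T_1,\alpha)$ via the standard evaluation $\int_0^\infty r(1+r^\alpha)^{-1}\dd r = (\pi/\alpha)\csc(2\pi/\alpha)$. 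This delivers the denominator and the first exponential in the numerator.

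The substance of the proof is the joint term $\PP(\sinr_f > T,\ \sinr_c > T_1)$. The key observation is that, conditioned on $r_1$ and on the tier-one point process $\Phi_1$, the two events are independent, since the fadings and the random $1/\Delta$ sub-band assignments on the two frequency bands are independent; hence the joint conditional probability factors into two per-point products over the \emph{same} $\Phi_1$. The step to handle with care is that one must not multiply the two marginals, because the edge-band interferers are the thinned subset of the very tier-one APs that also generate the common-band interference. Averaging the fading and the sub-band indicator at a fixed interferer at distance $x$, each tier-one point contributes the combined factor $\frac{1}{1+T_1 r_1^\alpha x^{-\alpha}}\bigl(1 - \frac{1}{\Delta}\bigl(1 - \frac{1}{1+T r_1^\alpha x^{-\alpha}}\bigr)\bigr)$, and a single PGFL over $\Phi_1$ then yields exactly $\xi(T,T_1,\alpha,\Delta)$ of \eqref{eq:xi}. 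The cross-tier APs act only on the common band, so they reproduce the same $\psi$ terms, and the two independent desired-signal CCDFs merge the noise exponents into the combined threshold $T+T_1$; this is the second exponential in the numerator.

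The main obstacle is thus isolated in this last step: correctly coupling the two sub-bands through the shared tier-one PPP rather than treating them as independent, which is the only computation not already supplied by \cite{Andrews2010} and is precisely what $\xi$ encodes. After substituting $x = r_1\sqrt{u}$ so that the common factor $r_1^2 = v$ is extracted from every interference integral and all kernels become dimensionless, the three pieces assemble as integrals against $\pi\lambda_1 e^{-\pi\lambda_1 v}\,\dd v$, and forming the ratio gives \eqref{eq:FFRclosed}.
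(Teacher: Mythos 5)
Your proposal is correct and follows essentially the same route as the paper's own proof: Bayes' rule with the inclusion--exclusion split $\PP(\sinr_f>T,\,\sinr_c<T_1)=\PP(\sinr_f>T)-\PP(\sinr_f>T,\,\sinr_c>T_1)$, the marginal terms evaluated via conditioning on $r_1$ and the PGFL (giving $\rho/\Delta$ and the cross-tier $\psi$ terms), and the joint term handled exactly as in Appendix~A by forming the joint Laplace transform over the \emph{shared} tier-one PPP, where averaging the independent fadings and the $1/\Delta$ sub-band indicator gives the per-point factor whose PGFL produces $\xi$. Your emphasis on not multiplying the two marginals because the edge-band interferers are a thinned subset of the same tier-one process is precisely the coupling the paper encodes through the indicator $\mathbf{1}(\delta_z=\delta_y)$ inside its joint Laplace transform, so the two arguments coincide.
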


\begin{proof}
The proof is given in Appendix A.
\end{proof}

An immediate observation of this framework is that it leads to expressions which are only a function of the relevant FFR design parameters. The intra-tier interference before and after FFR is applied are captured in the $\xi(T,{T_1},\alpha,\Delta)$ and $\rho(z,\alpha)$ terms respectively, while the cross tier interference terms for each tier are expressed by $\psi(z,\alpha)$. 
%
%
\subsection{Multi-tier coverage with SFR}
We now consider the CCDF of the $\sinr$ for edge users with SFR. In this case all the subbands overlap with those of the other tiers since SFR makes use of the entire spectrum but allocates edge users with $\sinr$ below the FFR threshold a higher transmit power determined by the $\beta$ parameter.

\begin{theorem}[SFR, closed access, edge user]
\label{thm:ClosedSFRedge}
The coverage probability of an SFR edge user whose initial $\sinr$ is less than $T_1$ is 
\begin{eqnarray}\bar{\mathrm{F}}_{\mathrm{SFR,c}}(T)&=\frac{ \pi \lambda_1 \int_0^\infty e^{-\pi \lambda_1 v\left(1+\rho(\frac{\eta T}{\beta},\alpha)+2\sum_{k=2}^K \kappa_k \psi(\frac{\gamma_k}{\beta}T,\alpha)\right)} e^{-\mu (T) \frac{\sigma^2}{\beta P_1} v^{\alpha/2}} \dd v}{1-\pi \lambda_1 \int_0^\infty e^{-\pi \lambda_1 v\left(1+\rho(\eta{T_1},\alpha)+2\sum_{k=2}^K \kappa_k \psi(\gamma_k{T_1},\alpha)\right)} e^{-\mu (\eta{T_1}) \frac{\sigma^2}{P_1} v^{\alpha/2}} \dd v}\nonumber\\ 
&-\frac{\pi \lambda_1 \int_0^\infty e^{-\pi \lambda_1 v \left(1+2\zeta(T,{T_1},\alpha,\Delta,\beta,\eta)+2\sum_{k=2}^K \kappa_k\left(\psi(\frac{\gamma_k}{\beta}T,\alpha)+\psi(\gamma_k{T_1},\alpha)\right)\right)} e^{-\mu (T+\eta{T_1}) \frac{\sigma^2}{P_1} v^{\alpha/2}} \dd v }{1-\pi \lambda_1 \int_0^\infty e^{-\pi \lambda_1 v\left(1+\rho(\eta{T_1},\alpha)+2\sum_{k=2}\kappa_k \psi(\gamma_k{T_1},\alpha)\right)} e^{-\mu (\eta{T_1}) \frac{\sigma^2}{P_1} v^{\alpha/2}} \dd v}.
\label{eq:SFRclosed}
\end{eqnarray}
\[\textrm{where}~ \zeta(T,{T_1},\alpha,\Delta,\beta,\eta)= \int_{r_1}^\infty\left[1- \frac{1}{1+\eta {T_1}{r_1}^{\alpha} x^{-\alpha}}\frac{1}{1+\frac{\eta}{\beta}T{r_1}^{\alpha}x^{-\alpha}} \right] x \dd x,\]
$\rho(z,\alpha)$ is given by \eqref{eq:rho}, and $\psi(z,\alpha)$, $\kappa_k$ and $\gamma_k$ are given by \eqref{eq:terms}.
\end{theorem}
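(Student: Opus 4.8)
The plan is to identify $\bar{\mathrm{F}}_{\mathrm{SFR,c}}(T)$ with the conditional coverage probability $\PP(\sinr' > T \mid \sinr < T_1)$, where $\sinr$ is the first-tier SINR that decides the edge/interior classification and $\sinr'$ is the SINR of the same user once it has been moved to an edge subband with boosted power $\beta P_1$. Writing this as $\PP(\sinr' > T, \sinr < T_1)/\PP(\sinr < T_1)$ and using the complement identity $\PP(\sinr' > T, \sinr < T_1) = \PP(\sinr' > T) - \PP(\sinr' > T, \sinr > T_1)$ reproduces the three-term shape of \eqref{eq:SFRclosed}: the denominator is $\PP(\sinr < T_1) = 1 - \PP(\sinr > T_1)$, the first numerator is the unconditional edge coverage $\PP(\sinr' > T)$, and the second numerator is the joint term $\PP(\sinr' > T, \sinr > T_1)$.

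Each of the three probabilities is computed exactly as in the proof of Theorem~\ref{thm:ClosedFFRedge}. I first condition on the distance $r_1$ to the serving first-tier AP; since the first tier is a PPP, its void probability supplies the weight $\pi\lambda_1 e^{-\pi\lambda_1 v}\,\dd v$ after the substitution $v = r_1^2$, which is the source of the outer integrals. For fixed $r_1$, the exponential (Rayleigh) serving-link fading lets me replace each event $\{\sinr > \tau\}$ by an exponential tail in the aggregate interference, so the probability becomes a noise exponential times the Laplace transform of the interference at argument $\mu\tau r_1^\alpha/P_1$. Evaluating these Laplace transforms through the probability generating functional (PGFL) of each independent tier's PPP produces the interference terms: the serving tier contributes an integral from $r_1$ (its nearest point is the excluded serving AP), which reduces to $\rho(\cdot,\alpha)$, while each tier $k\ge 2$ contributes an integral from $0$, reducing to $\psi(\cdot,\alpha)$. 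The parameters $\eta$ and $\beta$ enter only as scalings of the effective threshold in each Laplace argument — the first-tier interference carries power $\eta P_1$ while the boosted useful signal carries $\beta P_1$ — so $\sinr' > T$ yields the arguments $\tfrac{\eta}{\beta}T$ and $\tfrac{\gamma_k}{\beta}T$, and $\sinr > T_1$ yields $\eta T_1$ and $\gamma_k T_1$. This delivers the first numerator $\PP(\sinr' > T)$ and the denominator $\PP(\sinr < T_1)$ directly.

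The delicate step, where I expect the main difficulty, is the joint term $\PP(\sinr' > T, \sinr > T_1)$. Since the classification and the edge-band transmission occur on different subbands, I model the two serving-link fades as independent draws; the two useful-signal and noise exponents then simply add, giving the combined noise exponential displayed in the second numerator of \eqref{eq:SFRclosed}. The real subtlety is in the interference. Because the macro (first) tier fully reuses the spectrum under SFR, the \emph{same} set of first-tier interferers is present on both subbands, so their two per-interferer exponential factors multiply \emph{inside a single} PGFL integral; this produces the product $\tfrac{1}{1+\eta T_1 r_1^\alpha x^{-\alpha}}\,\tfrac{1}{1+\frac{\eta}{\beta}T r_1^\alpha x^{-\alpha}}$ and hence the coupled term $\zeta(T,T_1,\alpha,\Delta,\beta,\eta)$. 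Each cross-tier's contributions on the two subbands, by contrast, are independent, so the joint Laplace transform of tier $k$ \emph{factors} into two separate PGFL exponentials and gives the \emph{sum} $\psi(\tfrac{\gamma_k}{\beta}T,\alpha)+\psi(\gamma_k T_1,\alpha)$ rather than a single coupled integral. Keeping this correlation bookkeeping straight — a product for the shared serving-tier interferers against a sum for the independent cross-tier interferers — is the crux of the proof.

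Finally I would assemble the pieces: substitute $v = r_1^2$, collect the noise and interference exponents in each term, and divide the difference of the two numerators by $\PP(\sinr < T_1)$ to reach \eqref{eq:SFRclosed}. What remains is the routine reduction of the three interference integrals to the closed forms $\rho$, $\psi$, and $\zeta$ by the same change of variables used for Theorem~\ref{thm:ClosedFFRedge}.
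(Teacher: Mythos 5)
Your proposal matches the paper's own proof essentially step for step: the same Bayes'-rule three-term decomposition into $\PP(\sinr' > T)$, $\PP(\sinr' > T,\, \sinr > T_1)$, and $\PP(\sinr < T_1)$; the same conditioning on $r_1$ with the Rayleigh nearest-neighbor weight and substitution $v = r_1^2$; the same exponential-fading reduction to interference Laplace transforms evaluated via the PGFL; and, crucially, the same handling of the joint term, where the shared first-tier interferers contribute a product of two factors inside a single PGFL integral (yielding $\zeta$) while each cross-tier's contributions on the two subbands factor into separate PGFL exponentials (yielding the sum $\psi(\tfrac{\gamma_k}{\beta}T,\alpha)+\psi(\gamma_k T_1,\alpha)$), exactly as in the paper's joint Laplace transform expression. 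No gaps; this is the paper's argument.
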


\begin{proof}
The proof is given in Appendix B.
\end{proof}
The expressions differ from Strict FFR both due to the effective $\sinr$ and FFR thresholds shaped by the power control factor $\beta$ and effective interference power $\eta$ respectively.

\subsection{Model Evaluation}
While all our coverage probability results hold for general pathloss exponents $\alpha$ and different noise powers $\sigma^2$, in this section we present a special case where $\alpha = 4$ and $\sigma^2 = 0$. For this case the coverage probability results reduce to simple closed-form expressions, allowing clear insight into the performance of cell-edge users, something not previously possible with the grid model. This choice of pathloss exponent is in the range of commonly used values in practice \cite{Rappaport2002}. Furthermore, most urban cellular networks - where FFR is of the most interest - are interference-limited and noise is negligible compared to the background interference from the adjacent BSs.

In the case of $\alpha=4$ and no noise, for \textit{Strict FFR}, the CCDF is given as, 
\begin{eqnarray}
\label{eq:FFRMTsimp}
\bar{\mathrm{F}}_\mathrm{FFR,e}(T) = \frac{1 + \rho(T_1)+\frac{\pi}{2}\sum_{k=2}^K\kappa_k \sqrt{\gamma_k T}}{\rho(T_1)+\frac{\pi}{2}\sum_{k=2}^K\kappa_k \sqrt{\gamma_k T}}\left(\frac{1}{1 + \frac{\rho(T)}{\Delta}} - \frac{1}{1 + 2\xi(T,T_1,\lambda,\Delta)+\frac{\pi}{2}\sum_{k=2}^K\kappa_k \sqrt{\gamma_k T}}\right),
\end{eqnarray}
\begin{equation}
\textrm{where}~ \xi(T,T_1,4,\Delta) = \frac{T\rho(T)-\rho(T_1)\left(T_1\Delta -T\left(1+\Delta\right)\right)}{4\Delta(T_1-T)},\textrm{and}~\rho(x) = \sqrt{x}\arctan{\left(\sqrt{x}\right)}. 
\end{equation}

In the case of $\alpha=4$ and no noise, for \textit{SFR}, the CCDF is given as, 
\begin{eqnarray}
\label{eq:SFRMTsimp}
\bar{\mathrm{F}}_\mathrm{SFR,c}(T) &=& \frac{1 + \rho(\eta T_1)+\frac{\pi}{2}\sum_{k=2}^K\sqrt{\gamma_k T}}{\rho(\eta T_1)+\frac{\pi}{2}\sum_{k=2}^K\sqrt{\gamma_k T}}\times\nonumber\\
&~&\left(\frac{1}{1 + \frac{\rho(\frac{\eta}{\beta} T)}{\Delta} + \frac{\pi}{2}\sum_{k=2}^K\sqrt{\frac{\gamma_k}{\beta} T}} - \frac{1}{1 + 2\zeta(T,T_1,\lambda,\Delta)+\frac{\pi}{2}\sum_{k=2}^K\sqrt{\gamma_k T}}\right),
\end{eqnarray}

\begin{eqnarray}
\textrm{where}~\zeta(T,T_1,\beta,\eta) &=& \frac{{\eta}^{3/2}T\beta}{4\sqrt{T_1}(T-T_1\beta)}-\frac{\eta\beta{T}^3\left(2\arctan\left(\sqrt{\frac{\beta}{\eta T}}\right)+\pi\right)}{(T-T_1\beta)}+\nonumber\\
&~&\frac{\eta {T}^{3/2}{T_1}^{3/2}\beta^{5/2}\left(2\arctan\left(\frac{1}{\sqrt{\eta T_1}}\right)-\pi\right)}{(T-T_1\beta)}.
\end{eqnarray}

Fig. \ref{fig:ClosedEdge} shows the derived distributions for Strict FFR and SFR edge users for a three-tier network with no noise and $\alpha = 4$ compared with Monte-Carlo simulations. The accuracy of the mathematical model is highlighted by the exact match of the curves with the simulation results. We also see the improved coverage afforded to cell-edge users with FFR compared to universal frequency reuse. For Strict FFR, much of the gain results in the removal of both cross-tier interference and $1/\Delta$ of the intra-tier interference. SFR provides a lower coverage gain, but this can be mitigated by the use of higher $\beta$, or taking into account that more spectrum is available than under Strict FFR since each cell fully utilizes all subbands.

Using similar techniques we can derive the distributions for interior macro or femto users using this framework. Additionally, these results are also valid for $\alpha \neq 4$, but the expressions no longer have the same simple closed-form. Instead they are integrals that can be evaluated using numerical techniques.

\section{\label{sec:OpenCoverage} Coverage Probability with Open Access}
In the following analysis of open access downlinks we make the following two assumptions, (i) that there are only two-tiers of access points, and (ii) we only consider the $\sir$, as the access metric, neglecting noise. While our general framework can accommodate an unlimited number of tiers and noise, making those assumptions greatly reduces the complexity of the expressions for the $\sir$ distributions. The following $\sir$ distributions for Strict FFR and SFR are a function of two open access thresholds, $T_1$ set by the macro tier and $T_2$ set by the second tier of APs. The open access thresholds determine whether a user is switched to a reuse-$\Delta$ sub-band or served by a either the common band of the macrocell or the nearest second-tier AP. 

Let $\sir_1$ and $\sir_2$ denote the $\sir$ at the typical mobile of the closest first and second tier AP respectively, 
\begin{equation}
\label{eq:SIR1}
\sir_1 = \frac{P_1{g_1}{r_1}^{-\alpha}}{P_1I_1+P_2I_2+P_2{g_2}{r_2}^{-\alpha}}~,~\sir_2 = \frac{P_2{g_2}{r_2}^{-\alpha}}{P_1I_1+P_2I_2+P_1{g_1}{r_1}^{-\alpha}}.
\end{equation}
Here $r_1$ denotes the distance of the mobile at the origin to the nearest macro BS, and $r_2$ the distance to the nearest femtocell. The interference caused by the macro BSs is denoted by $I_1$, while $I_2$ is the interference caused by the femtocells, excluding the closest one. If for a mobile user, $\sir_1 < T_1$ and $\sir_2 < T_2$, then the mobile user is allocated a new FFR sub-band $\delta_y$, where $\delta \in \{1,...,\Delta\}$ with uniform probability $\frac{1}{\Delta}$ and a new $\sir$ given by $\hat{\sir}$ which is different under Strict FFR or SFR. 
The CCDF of the edge user $\sir$ under open access is given by
\begin{equation}
\label{eq:pcCond}
\bar{\mathrm{F}}_{\mathrm{FFR,open,e}}(T) = \mathbb{P}\left(\hat{\sir} > T ~|~ \sir_1 < T_1 ~,~ \sir_2 < T_2\right).
\end{equation}
As we can see from \eqref{eq:pcCond} the analysis of the coverage probability is more complicated relative to closed access due to the inter-dependence of the terms $\sir_1$ and $\sir_2$.

\subsection{Strict FFR} 
First we consider the distribution of \eqref{eq:pcCond} for Strict FFR. Since the mobile user is allocated a different sub-band, it experiences new fading power $\hat{g_1}$ and out-of-cell interference $P_1\hat{I_1}$, which does not have cross-tier interference. 

\begin{theorem}[Strict FFR, open access, edge user]
\label{thm:OpenFFRedge}
The coverage probability of an edge user in a strict FFR system, assigned a FFR sub-band is 
\begin{eqnarray}
\label{eq:OpenBayesStrictFFR}
\bar{\mathrm{F}}_{\mathrm{FFR,o}}(T)&=&\frac{ p_{c}(T,\lambda_1,\alpha,\Delta)-\int_{0}^{\infty}\int_{0}^{\infty}\left(2\pi\lambda_1r_1 e^{-\pi\lambda_1{r_1}^2}\right)\left(2\pi\lambda_2r_2e^{-\pi\lambda_2{r_2}^2}\right) g_n(r_1,r_2) \dd r_1 \dd r_2 }{\int_{0}^{\infty}\int_{0}^{\infty}\left(2\pi\lambda_1r_1 e^{-\pi\lambda_1{r_1}^2}\right)\left(2\pi\lambda_2r_2e^{-\pi\lambda_2{r_2}^2}\right) g_d(r_1,r_2) \dd r_1 \dd r_2}\nonumber
\end{eqnarray}
\begin{align*}
\textrm{where}~ g_d(r_1,r_2) &=
1-\epsilon_1e^{\left(-2\pi\lambda_1\rho_{1,1}(T_1,\alpha)\right)} e^{\left(-2\pi\lambda_2\rho_{1,2}(\gamma T_1,\alpha)\right)}-\epsilon_2e^{\left(-2\pi\lambda_1\rho_{2,1}(T_2/\gamma,\alpha)\right)} e^{\left(-2\pi\lambda_2\rho_{2,2}(T_2,\alpha)\right)},\\
g_n(r_1,r_2)
&= \epsilon_1e^{-2\pi \left( \lambda_1 \xi_{1,1}\left(T,T_1,\alpha,\Delta\right)+ \lambda_2 \rho_{1,2}\left(T_1,\alpha\right)\right)} + \epsilon_2e^{-2\pi \left(\lambda_1 \xi_{2,1}\left(T,T_2/\gamma,\alpha,\Delta\right) + \lambda_2 \rho_{2,2}\left(T_2,\alpha\right)\right)},
\end{align*}
\begin{equation}
\label{eq:xiClosed}
\xi_{a,b}\left(T,z,\alpha,\Delta\right) = \int_{r_b}^\infty\left[1- \frac{1}{1 + z r_a^{\alpha} x^{-\alpha}}\left(1-\frac{1}{\Delta}\left(1-\frac{1}{1 + T r_b^{\alpha} x^{-\alpha}}\right)\right) \right] x \dd x,
\end{equation}
\begin{equation}
\label{eq:rhoOpen}
\rho_{a,b}\left(z,\alpha\right) = \int_{r_b}^{\infty}\left(1-\frac{1}{1 + z r_a^{\alpha}x^{-\alpha}}\right)x dx,
\end{equation}
\begin{equation}
\label{eq:OpenTerms}
\textrm{and}~\gamma = \frac{P_2}{P_1} ~,~ \epsilon_1 = \left(\frac{1}{T_1\gamma\frac{{r_1}^{\alpha}}{{r_2}^{\alpha}}+1}\right), ~\textrm{and}~ \epsilon_2 = \left(\frac{1}{T_2\left(\gamma \frac{{r_1}^{\alpha}}{{r_2}^{\alpha}}\right)^{-1}+1}\right).
\end{equation}
\end{theorem}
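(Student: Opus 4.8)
The plan is to evaluate the conditional coverage probability of \eqref{eq:pcCond} as a Bayes ratio,
\[
\bar{\mathrm{F}}_{\mathrm{FFR,o}}(T)=\frac{\PP(\hat{\sir}>T,\,\sir_1<T_1,\,\sir_2<T_2)}{\PP(\sir_1<T_1,\,\sir_2<T_2)},
\]
and to reduce both numerator and denominator to one-sided threshold events. First I would note that, in the regime of interest $T_1 T_2 \ge 1$, the events $\{\sir_1 \ge T_1\}$ and $\{\sir_2 \ge T_2\}$ are mutually exclusive: writing $\sir_1 = S_1/(I+S_2)$ and $\sir_2 = S_2/(I+S_1)$ with $S_1 = P_1 g_1 r_1^{-\alpha}$, $S_2 = P_2 g_2 r_2^{-\alpha}$ and $I = P_1 I_1 + P_2 I_2 \ge 0$, both inequalities would force $S_1 \ge T_1 T_2 S_1$, which is impossible. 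Hence the denominator collapses to $1 - \PP(\sir_1\ge T_1) - \PP(\sir_2 \ge T_2)$, and the numerator to $\PP(\hat{\sir}>T) - \PP(\hat{\sir}>T,\sir_1\ge T_1) - \PP(\hat{\sir}>T,\sir_2\ge T_2)$, where $\PP(\hat{\sir}>T)=p_c(T,\lambda_1,\alpha,\Delta)$ is the reuse-$\Delta$ single-tier coverage established in prior work, since $\hat{\sir}$ depends only on $r_1$.

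Next I would condition on the nearest-AP distances $r_1,r_2$, which are independent with the Rayleigh densities $2\pi\lambda_k r_k e^{-\pi\lambda_k r_k^2}$ arising from the PPP void probabilities, compute each conditional probability, and only then average. Every threshold event has the form $\{g \ge (\text{const})\cdot(\text{interference})\}$; because the desired-signal fade $g$ is exponential, conditioning on the interferers turns this into $\exp(-\mu\cdot(\text{const})\cdot(\text{interference}))$, so each conditional probability factors into Laplace transforms. The \emph{other tier's} nearest signal (e.g. $P_2 g_2 r_2^{-\alpha}$ in the denominator of $\sir_1$, and symmetrically $P_1 g_1 r_1^{-\alpha}$ in $\sir_2$) is the Laplace transform of a single exponential, producing the factors $\epsilon_1,\epsilon_2$ of \eqref{eq:OpenTerms}. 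The aggregate inter-cell and cross-tier interference give Laplace transforms of PPP shot noise, evaluated through the probability generating functional (PGFL) of each independent tier; after the standard change of variable these become the $\rho_{a,b}$ integrals of \eqref{eq:rhoOpen}. Assembling the three factors yields $\PP(\sir_1\ge T_1\mid r_1,r_2)$ and $\PP(\sir_2\ge T_2\mid r_1,r_2)$, hence $g_d$.

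The main obstacle is the numerator, where I must compute $\PP(\hat{\sir}>T,\sir_1\ge T_1\mid r_1,r_2)$ and its tier-swapped analogue, because the common-band macro interference $I_1$ and the FFR-band macro interference $\hat{I}_1$ are \emph{not} independent — they are generated by the same macrocell PPP. The key step, paralleling the closed-access proof of Theorem~\ref{thm:ClosedFFRedge} but now for two coupled links, is to keep both terms inside a single PGFL over the macro process, where an interferer at distance $x$ contributes the common-band fading transform $(1+T_1 r_1^\alpha x^{-\alpha})^{-1}$ times the FFR-band transform $1-\frac{1}{\Delta}\bigl(1-(1+T r_1^\alpha x^{-\alpha})^{-1}\bigr)$; the latter encodes that the interferer falls on the user's reuse-$\Delta$ subband with probability $1/\Delta$ and then carries an independent fade. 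The resulting per-point factor integrates to exactly $\xi_{1,1}(T,T_1,\alpha,\Delta)$ of \eqref{eq:xiClosed}, while the independent femto interference and nearest-femto signal supply the $\rho_{1,2}$ and $\epsilon_1$ factors, giving the first term of $g_n$. Exchanging the roles of the tiers — so the nearest macro's common-band signal $g_1 r_1^{-\alpha}$ interferes with $\sir_2$ while $\hat{g}_1 r_1^{-\alpha}$ remains the FFR-band signal, again sharing the macro PPP — produces $\xi_{2,1}(T,T_2/\gamma,\alpha,\Delta)$, $\rho_{2,2}$ and $\epsilon_2$, i.e. the second term of $g_n$. Finally I would average $g_d$ and $g_n$ against the product Rayleigh density over $(r_1,r_2)$ and divide, recovering the stated expression.
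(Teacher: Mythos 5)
Your proposal is correct, and at the level of the point-process machinery it is the same as the paper's Appendix C: condition on $(r_1,r_2)$, reduce every fading event to exponential tail probabilities so that $\epsilon_1,\epsilon_2$ and the interference Laplace transforms appear, keep the coupled common-band and FFR-band macro interference $I_1$ and $\hat{I_1}$ inside a \emph{single} PGFL over the shared macro PPP with per-point factor $\left(1+T_1 r_1^{\alpha}x^{-\alpha}\right)^{-1}\left(1-\frac{1}{\Delta}\left(1-\left(1+T r_1^{\alpha}x^{-\alpha}\right)^{-1}\right)\right)$ (which is exactly what produces the $\xi_{a,b}$ terms of \eqref{eq:xiClosed}), treat the independent femto tier via $\rho_{a,b}$ of \eqref{eq:rhoOpen}, and de-condition against the Rayleigh densities. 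Where you genuinely diverge is the treatment of the joint fading event over $(g_1,g_2)$: the paper conditions on $g_2$ and integrates the exponential density of $g_1$ over a two-sided interval with lower limit $(\cdot)^{+}$, then averages over $g_2$ to obtain $1-\epsilon_1 e^{-\mu T_1 \bar{I} r_1^{\alpha}/P_1}-\epsilon_2 e^{-\mu T_2 \bar{I} r_2^{\alpha}/P_2}$, whereas you use inclusion--exclusion together with the observation that $\{\sir_1\ge T_1\}$ and $\{\sir_2\ge T_2\}$ are almost surely disjoint when $T_1T_2\ge 1$. The two routes yield the identical intermediate expression, but yours has the merit of making explicit a hypothesis the paper uses silently: the paper's unclamped interval formula $e^{-\mu L^{+}}-e^{-\mu U}$ becomes negative on the event $L>U$, which has positive probability precisely when $T_1T_2<1$, so the stated theorem is exact only in the regime $T_1T_2\ge 1$ that you name (satisfied by the paper's own evaluation, $T_1=1$\,dB, $T_2=5$\,dB). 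Conversely, the paper's direct integration produces the $\epsilon_1,\epsilon_2$ structure in one sweep without needing a disjointness lemma. Two minor points to tighten in your argument: at $T_1T_2=1$ disjointness additionally requires $\bar{I}>0$, which holds almost surely for a PPP on the plane; and the first numerator term equals $p_c(T,\lambda_1,\alpha,\Delta)$ because the FFR band carries no cross-tier interference and the reuse-$\Delta$ thinning gives the factor $\rho(T,\alpha)/\Delta$, matching the closed-access expression cited from prior work. With those noted, your decomposition reproduces $g_d$, $g_n$, and $p_c$ term by term.
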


\begin{proof}
The proof is given in Appendix C.
\end{proof}
Compared to the closed access results, the derivations are not nearly as clean due to the dependence of the user's $\sir$ on $r_1$ and $r_2$. The derivations require evaluating a double integral which does not have a closed form. In fact, the number of tiers under consideration determines the number of integrals which must be evaluated. Despite this, we can still obtain insight into the underlying nature of the distributions. Also, it is expected that most practical deployments would not have more than about three tiers even in dense environments, making this analysis practical through the use of numerical evaluation of the integrals.

\subsection{SFR}
As was the case for closed access, the SFR expressions differ from Strict FFR due to the power control factor and effective interference power. Additionally the full $\Delta$-reuse of subbands with SFR results in cross-tier interference for the edge users as well as interior users. We now give the expression for coverage probability with open access and SFR based on the $\sir$ in \eqref{eq:pcCond}.

\begin{theorem}[SFR, open access, edge user]
\label{thm:OpenSFRedge}
The coverage probability of an SFR edge user whose initial $\sir$ is less than $T_1$ and $T_2$ is 
\begin{eqnarray}
\label{eq:SFRopenFinal}
\bar{\mathrm{F}}_{\mathrm{SFR,o}}(T)&=&\frac{\pi \lambda_1 \int_0^\infty e^{-\pi \lambda_1 v\left(1+\rho(\frac{\eta}{\beta} T,\alpha)+2\kappa\psi(\frac{\gamma}{\beta}T,\alpha)\right)} \dd v }{\int_{0}^{\infty}\int_{0}^{\infty}\left(2\pi\lambda_1r_1 e^{-\pi\lambda_1{r_1}^2}\right)\left(2\pi\lambda_2r_2e^{-\pi\lambda_2{r_2}^2}\right) f_d(r_1,r_2)\dd r_1 \dd r_2}\nonumber\\ 
&-&\frac{\int_{0}^{\infty}\int_{0}^{\infty}\left(2\pi\lambda_1r_1 e^{-\pi\lambda_1{r_1}^2}\right)\left(2\pi\lambda_2r_2e^{-\pi\lambda_2{r_2}^2}\right) f_n(r_1,r_2) \dd r_1 \dd r_2 }{\int_{0}^{\infty}\int_{0}^{\infty}\left(2\pi\lambda_1r_1 e^{-\pi\lambda_1{r_1}^2}\right)\left(2\pi\lambda_2r_2e^{-\pi\lambda_2{r_2}^2}\right) f_d(r_1,r_2) \dd r_1 \dd r_2}.\nonumber
\end{eqnarray}
\begin{align*}
\textrm{where}~f_n(r_1,r_2)&= \epsilon_1e^{-2\pi\lambda_1 \left(\zeta_{1,1}\left(T,T_1,\alpha,\Delta,\beta,\eta\right)+\kappa\psi\left(\frac{\gamma}{\beta}T,\alpha\right)+ \kappa \rho_{1,2}\left(\gamma T_1,\alpha\right)\right)}\\
&+ \epsilon_2e^{-2\pi \lambda_1 \left(\zeta_{2,1}\left(T,T_2/\gamma,\alpha,\Delta,\beta,\eta\right)+\kappa \psi\left(\frac{\gamma}{\beta}T,\alpha\right)+\kappa \rho_{2,2}\left(T_2,\alpha\right)\right)},
\end{align*}
\begin{align*}
f_d(r_1,r_2)=1-\epsilon_1e^{\left(-2\pi\lambda_1\left(\rho_{1,1}(\eta T_1,\alpha)+ \kappa\rho_{1,2}(\gamma T_1,\alpha)\right)\right)}-\epsilon_2e^{\left(-2\pi\lambda_1\left(\rho_{2,1}\left(\frac{\eta}{\gamma}T_2,\alpha\right)+\kappa\rho_{2,2}\left(T_2,\alpha\right)\right)\right)},
\end{align*}
\begin{equation}
\label{eq:zetaOpen}
\zeta_{a,b}(y,z,\beta,\eta)= \frac{1}{2(y-z)}\left(y\rho_{a,b}(y,\alpha)+z\rho_{a,b}(z,\alpha)\right),\textrm{and}~\rho_{a,b}\left(z,\alpha\right)~\textrm{given by \eqref{eq:rhoOpen}}.
\end{equation}
\end{theorem}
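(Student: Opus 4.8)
The plan is to evaluate the conditional probability in \eqref{eq:pcCond} by Bayes' rule, writing
$\bar{\mathrm{F}}_{\mathrm{SFR,o}}(T)=\PP(\hat{\sir}>T,\ \sir_1<T_1,\ \sir_2<T_2)\,/\,\PP(\sir_1<T_1,\ \sir_2<T_2)$,
and to compute both numerator and denominator by first conditioning on the two serving distances $r_1,r_2$. Since the tiers are independent PPPs, the nearest-AP distances carry the Rayleigh densities $2\pi\lambda_k r_k e^{-\pi\lambda_k r_k^2}$, which supply the outer double integrals in \eqref{eq:SFRopenFinal}; everything inside is a conditional probability given $(r_1,r_2)$. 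The two workhorses are the exponential-fading identity $\PP(g>a)=\E[e^{-\mu a}]$ applied to $g_1$ and $g_2$, which converts each coverage event into a Laplace transform of the aggregate interference in \eqref{eq:SIR1}, and the PPP Laplace functional, which evaluates that transform as $\exp\!\left(-2\pi\lambda\int_R^\infty(1-\E[e^{-sGx^{-\alpha}}])\,x\,\mathrm{d}x\right)$ with near-field cutoff $R$ equal to the relevant nearest-interferer distance.

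For the denominator I would first show that $\{\sir_1\ge T_1\}$ and $\{\sir_2\ge T_2\}$ are disjoint: writing $S_1,S_2$ for the two serving powers and $I$ for the common out-of-cell interference, the two inequalities force $S_1(1-T_1T_2)\ge T_1 I(1+T_2)$, which is impossible once $T_1T_2\ge 1$. Hence $f_d(r_1,r_2)=1-\PP(\sir_1\ge T_1\mid r_1,r_2)-\PP(\sir_2\ge T_2\mid r_1,r_2)$, and each conditional coverage probability factors into three pieces: the nearest opposite-tier AP, whose exponential fading integrates to the rational prefactors $\epsilon_1,\epsilon_2$ of \eqref{eq:OpenTerms}; the tier-1 field, giving $\rho_{1,1}(\eta T_1,\alpha)$ (resp.\ $\rho_{2,1}(\tfrac{\eta}{\gamma}T_2,\alpha)$) through \eqref{eq:rhoOpen}; and the tier-2 field, giving $\kappa\rho_{1,2}(\gamma T_1,\alpha)$ (resp.\ $\kappa\rho_{2,2}(T_2,\alpha)$), with the SFR effective-interference factor $\eta$ rescaling the thresholds. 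This reproduces $f_d$.

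For the numerator I would use the complementary decomposition $\PP(\hat{\sir}>T,\text{edge})=\PP(\hat{\sir}>T)-\PP(\hat{\sir}>T,\sir_1\ge T_1)-\PP(\hat{\sir}>T,\sir_2\ge T_2)$, again valid by disjointness. The first term is a standalone SFR coverage computation on the reuse sub-band with serving power $\beta P_1$ and consolidated interference $\eta P_1\hat{I}_1+P_2\hat{I}_2$; the exponential-fading/Laplace steps, after the substitution $v=r_1^2$, yield exactly $\pi\lambda_1\int_0^\infty e^{-\pi\lambda_1 v(1+\rho(\frac{\eta}{\beta}T,\alpha)+2\kappa\psi(\frac{\gamma}{\beta}T,\alpha))}\,\mathrm{d}v$, with $\psi$ the full-plane cross-tier term. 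The two joint terms give $f_n$: conditioned on $(r_1,r_2)$, the fadings and sub-band choices on the reuse band are independent of those on the edge-decision band, but the underlying tier-1 point process is shared, so the tier-1 contribution must be computed as a single Laplace functional over one PPP whose per-interferer survival factor is the product of the edge-band factor (threshold $\eta T_1$ or $\tfrac{\eta}{\gamma}T_2$) and the reuse-band factor (threshold $\tfrac{\eta}{\beta}T$). A partial-fraction reduction of this product integrand collapses it to the closed form $\zeta_{a,b}$ of \eqref{eq:zetaOpen}; the remaining reuse-band cross-tier interference contributes $\kappa\psi(\tfrac{\gamma}{\beta}T,\alpha)$, the edge-band cross-tier interference contributes $\kappa\rho_{a,2}$, and the nearest opposite-tier AP again contributes $\epsilon_1$ or $\epsilon_2$, assembling $f_n$.

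The main obstacle is exactly this dependency bookkeeping in the joint numerator terms. Unlike the closed-access proofs, the edge-decision SIRs and the reuse-band SIR are coupled both through the shared serving distances and, more delicately, through the shared tier-1 interferer locations, so the two coverage events are \emph{not} conditionally independent given $(r_1,r_2)$ and cannot be treated as a product of separate Laplace transforms; they must be merged into a single Laplace functional, and the algebraic reduction of that merged integrand to $\zeta_{a,b}$ is the crux. A secondary point requiring care is keeping the SFR bookkeeping consistent, so that the effective-interference factor $\eta$, the power-boost factor $\beta$, and the full-reuse (no $1/\Delta$ thinning) character of SFR enter the edge-band and reuse-band thresholds in the correct places. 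Deconditioning by the Rayleigh densities and collecting terms then produces \eqref{eq:SFRopenFinal}.
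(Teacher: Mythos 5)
Your proposal is correct and follows the same architecture as the paper's proof (Appendix D, which leans on the method of Appendix C): Bayes' rule on \eqref{eq:pcCond}, conditioning on $(r_1,r_2)$ with Rayleigh densities, converting coverage events into Laplace transforms via the exponential fading, and --- the crux you correctly isolate --- evaluating the tier-1 contribution to the joint numerator terms as a \emph{single} PGFL over the shared tier-1 PPP with a product survival factor per interferer, which partial-fractions into $\zeta_{a,b}$, while the tier-2 and full-reuse cross-tier pieces contribute $\kappa\rho_{a,2}$ and $\kappa\psi(\tfrac{\gamma}{\beta}T,\alpha)$ and the nearest opposite-tier AP's fading produces $\epsilon_1,\epsilon_2$. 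The one place you diverge is the treatment of the joint fading event: the paper conditions on $g_2$ and integrates $g_1$ over the interval $\bigl[\tfrac{r_1^{\alpha}}{P_1}(\tfrac{P_2}{T_2}g_2r_2^{-\alpha}-\bar{I})^{+},\,T_1\tfrac{r_1^{\alpha}}{P_1}(\bar{I}+P_2g_2r_2^{-\alpha})\bigr]$, whereas you decompose via the disjointness of $\{\sir_1\ge T_1\}$ and $\{\sir_2\ge T_2\}$ into $1-\PP(\sir_1\ge T_1)-\PP(\sir_2\ge T_2)$ (and analogously in the numerator). These are equivalent and yield the same three-term $f_d$ and two-term $f_n$; your version has the merit of making explicit the condition $T_1T_2\ge 1$ under which the decomposition is exact --- the paper's interval formula silently assumes the same thing (otherwise the interval can be empty for large $g_2$ and the expression $e^{-\mu L}-e^{-\mu U}$ goes negative), so flagging it is a genuine, if small, improvement in rigor over the published argument.
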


\begin{proof}
The proof is given in Appendix D.
\end{proof}
The expressions have a similar form but differ from Strict FFR due to the effect of $\eta$ and $\beta$ on the $\sir$ and FFR thresholds. As with Strict FFR, the derivations do not reduce as simply as closed access expressions due to the dependence of the user's $\sir$ on $r_1$ and $r_2$, but still can be computed with a single integral in the case of $\sigma^2 = 0$ and $\alpha = 4$.

\subsection{Model Evaluation}
Fig. \ref{fig:OpenEdge} shows the derived distributions for Strict FFR and SFR edge users for a two-tier network with no noise and $\alpha = 4$ compared with Monte-Carlo simulations. As with closed access, the curves match exactly. We also note that there is an upwards shift in the coverage probability curves, due to the impact of off-loading of users onto the secondary tier. With closed access, users whose $\sinr$ falls below the first tier FFR threshold $T_1 = 1$dB would be assigned a FFR band and may or may not be able to be covered due to interference or propagation challenges, however if their $\sinr$ to a second tier AP is greater than $T_2 = 5$dB, they are guaranteed coverage and affect the distribution of the users who utilize FFR. The selection of the FFR thresholds is further investigated in the following section.

\section{System Design Implications \label{sec:applications}}
In this section we present several applications of the Strict FFR and SFR $\sinr$ and $\sir$ distributions derived for closed and open access in Sections \ref{sec:ClosedCoverage} and \ref{sec:OpenCoverage}, which illustrate how they can be used to provide additional tools and insight for the system design of heterogeneous networks utilizing FFR. 

\subsection{Average Edge User Rate}
In modern cellular networks, the important metric of average achievable rate can be derived from the $\sinr$ statistics. In this section we illustrate how the coverage results derived in Section \ref{sec:ClosedCoverage} and \ref{sec:OpenCoverage} can be straightforwardly extended to develop average edge user rate expressions under Strict FFR or SFR.  

The average data rate $\bar{\tau} = \mathbb{E}\left[\ln\left(1 + \sinr\right)\right]$ is achieved by the users, assuming adaptive modulation and coding, and the expressions are given in terms of nats/Hz, where 1 bit $= \log_e(2)$ nats. The average rate of an edge user is determined by integrating over the $\sinr$ distribution and fading. Due to the two-stage nature of FFR the $\sinr_{\textrm{e}}$ of the edge user on the new subband is conditioned on the previous $\sinr_{\textrm{i}}$ on the common subband. Thus we have
\begin{eqnarray}
\bar{\tau} &=& \mathbb{E}\left[\ln\left(1 + \sinr\right)\right] = \int_{r > 0}e^{-\pi\lambda r^2}\mathbb{E}\left[\ln\left(1 + \sinr_{\textrm{e}}\right)\right]2\pi\lambda r\dd r, \nonumber\\
&=& \int_{r > 0}e^{-\pi\lambda r^2}\int_{t > 0} \mathbb{P}\left[\ln\left(1 + \sinr_{\textrm{e}}\right) > t ~\bigg|~ \sinr_{\textrm{i}} < \TR\right]2\pi\lambda \dd t ~ r\dd r.\nonumber
\end{eqnarray}
where we use the fact that since the rate $\tau = \ln(1 + \sinr)$ is a positive random variable, $E[\tau] = \int_{t>0} P(\tau > t)\dd t$. From the above expression we see that the derivation of these terms involves substituting $e^{t-1}$ in place of the $\sinr$ threshold $T$ and computing an additional integral.

\subsection{Multi-tier interference and closed access}
We now consider a two-tier network with Strict FFR for the macro users and closed access and show the connection between the density ratio of the tiers $\kappa$ and the $\sinr$ distribution. Fig. \ref{fig:MTkappa} plots the distribution for edge users as an increasing function of $\kappa$, effectively increasing the density of second-tier APs. As $\kappa$ increases we see in Fig. \ref{fig:MTkappa} that the $\sinr$ increases for macrocell users. This  is a consequence of the use of Strict FFR, since the FFR bands are reserved for only macrocell users, any user moving from the common band to the FFR band will see a reduction in interference. As the interference from the second tier increases with $\kappa$, more and more macro users have $\sinr$ below $T_1$ and since they cannot connect to the second tier due to the closed access constraint, they must be moved onto a FFR sub-band. The implication of this result is that the size of the partitions will need to be increased, which for Strict FFR, can cause the overall sum rate of the macrocells to decrease due to the reduction in overall spectrum usage. 

\subsection{Open access FFR thresholds}
In Fig. \ref{fig:MTt2} the SFR edge user $\sir$ CCDF is shown for different values of $T_2$, the second-tier FFR threshold under open access. Decreasing $T_2$ increases the number of mobile users which can connect to that AP on the common sub-band. From Fig. \ref{fig:MTt2} we see that this results in the overall increase of the $\sir$ of the edge users. In other words, as $T_2$ increases, only the users with the worst $\sir$ are given FFR sub-bands and they also are the users who can have the greatest benefit from the FFR sub-bands.

A related concept is called biasing, in which the access thresholds of the femtocells or other secondary APs are adjusted in order to increase the offload from the macrocell. The reasons for biasing may not be solely related to the ability of the macrocell to provide coverage for a user, but rather to reduce traffic for especially overloaded macrocells. Our proposed framework can implicitly capture this effect in the design of $T_1$ and $T_2$. By raising $T_1$ and lowering $T_2$ we can define a middle $\sir$ range $T_{Bias} = T_1-T_2$, wherein a desired percentage of macrocell users are offloaded.    

\section{\label{sec:conclusion} Conclusion}
This work has presented a new tractable analytical framework for evaluating coverage probability in heterogeneous networks utilizing Strict FFR and SFR which captures the non-uniformity of these deployments and gives insight into the performance tradeoffs of those FFR strategies. The model presented in this work can be utilized as a foundation for future research in interference management and performance analysis of heterogeneous networks utilizing dynamic FFR strategies for addressing changing channel conditions and user traffic in the network \cite{Ali2009,Gustavo2009,Stolyar2009}. Additionally, in the uplink, the constraints of power control, mobility of the interfering mobiles, and the important metric of power consumption at the mobile device impact the system design, make analysis very challenging using the traditional grid model \cite{Wamser2010}. Tractable analysis should assist system designers in evaluating the performance of potential algorithms in non-uniform and multi-tier deployments. 

\appendices
\section{\label{sec:AppA} Proof of Strict FFR, Closed Access Theorem}
A macrocell connected user $y$ with $\sinr_1 < {T_1}$ is given a FFR sub-band $\delta_y$, where $\delta \in \{1,...,\Delta\}$ with uniform probability $\frac{1}{\Delta}$, and experiences new fading power $\hat{g_1}$ and out-of-cell interference $P_1\hat{I_1}$, instead of $g_1$ and $P_1I_1 + \displaystyle\sum_{k=2}^K P_kI_k$. The CCDF of the edge user $\bar{\mathrm{F}}_{\mathrm{FFR,c}}(T)$ is now conditioned on its previous $\sinr$. Using Bayes' rule we have,
\begin{equation}
\mathbb{P}\left(\frac{P_1\hat{g_1} {r_1}^{-\alpha}}{\sigma^2+P_1\hat{I_1}} > T ~\bigg|~ \frac{P_1g_1{r_1}^{-\alpha}}{\sigma^2+P_1I_1+\sum_{k=2}^K P_kI_k} < {T_1} \right)=
\frac{\mathbb{P}\left(\frac{P_1\hat{g_1} {r_1}^{-\alpha}}{\sigma^2+P_1\hat{I_1}} > T \ , \frac{P_1g_1{r_1}^{-\alpha}}{\sigma^2+P_1I_1+\sum_{k=2}^K P_kI_k} < {T_1} \right)}{\mathbb{P}\left(\frac{P_1g_1{r_1}^{-\alpha}}{\sigma^2+P_1I_1+\sum_{k=2}^K P_kI_k} < {T_1} \right)}.
\label{eq:PcClosedFFR}
\end{equation}

Conditioning on $r_1$, the distance to the nearest BS, which is Rayleigh distributed and focusing on the numerator of \eqref{eq:PcClosedFFR}, since $\hat{g_1}$ and $g_1$ are i.i.d. exponentially distributed with mean $\mu$, gives
\[ \E \left[e^{\left(-\mu \frac{T}{P_1} {r_1}^{\alpha}(\sigma^2 + P_1\hat{I_1}) \right)}\right] - \E \left[e^{\left(-\mu \frac{T}{P_1} {r_1}^{\alpha}(\sigma^2 + P_1\hat{I_1}) \right)}e^{\left(-\mu \frac{{T_1}}{P_1} {r_1}^{\alpha}\left(\sigma^2 + P_1I_1 +\sum_{k=2}^K P_kI_k \right) \right)}\right],\] 

Factoring out terms dependent on the independent noise power $\sigma^2$ we observe that the expectation of the second term with respect to $\hat{I_1}$, $I_1$, $I_2$, ... , and $I_K$ is the joint Laplace transform $\mathscr{L}(\hat{s_1},s_1,s_2,...,s_{K})$ of $\hat{I_1}$, $I_1$, $I_2$, ..., and $I_K$ given by 
\begin{align*}
&= \E \left[\exp \left(-\hat{s_1}\hat{I_1}-s_1 I_1 - \sum_{k=2}^K s_k I_{k} \right)\right]\\
&= \E \left[\exp \left(-\hat{s_1}\sum_{z \in Z_1} \hat{G_z}{R_{z}}^{-\alpha}\mathbf{1}(\delta_z = \delta_y) - s_1\sum_{z \in Z_1}G_z {R_{z}}^{-\alpha} - \sum_{k=2}^K \left(s_k\sum_{z \in Z_k}G_z {R_{z}}^{-\alpha}\right)\right)\right]\\
&= \E \left[\prod_{z \in Z_1}\left(1-\E\left[\mathbf{1}(\delta_z = \delta_y)\right](1-e^{(-\hat{s_1}\hat{G_z}{R_{z}}^{-\alpha})})\right)e^{\left(-s_1 G_z{R_{z}}^{-\alpha}\right)}\right] \prod_{k=2}^K\E\left[\prod_{z\in Z_k}e^{\left(-s_k G_z{R_{z}}^{-\alpha}\right)}\right],
\end{align*}
where $\mathbf{1}(\delta_y = \delta_z)$ is an indicator function that takes the value 1 if base station $z$ is transmitting to an edge user on the same sub-band $\delta$ as user $y$, and the third step arises from the independence of $I_1$ and $\hat{I_1}$ with respect to $I_2$,...,$I_K$. Since $\hat{G_z}$ and $G_z$ are also exponential random variables with mean $\mu$, we can evaluate the above expression as
\begin{equation}
\E \left[\prod_{z \in Z_1}\left(1-\frac{1}{\Delta}\left(1-\frac{\mu}{\mu+\hat{s_1}{R_{z}}^{-\alpha}}\right) \right)\frac{\mu}{\mu+s_1 {R_{z}}^{-\alpha}}\right]\prod_{k=2}^{K}\E \left[\prod_{z \in Z_k}\frac{\mu}{\mu+s_k {R_{z}}^{-\alpha}}\right].\nonumber
\label{eq:closedLFFR}
\end{equation}
By using the probability generating functional (PGFL) of the PPP \cite{Stoyan1996} we obtain
\[\mathscr{L}(\hat{s_1},s_1,s_2,...,s_K)= e^{\left(-2\pi \lambda_1 \int_{r_1}^\infty\left[1- \frac{\mu}{\mu+s_1 x^{-\alpha}}\left(1-\frac{1}{\Delta}\left(1-\frac{\mu}{\mu+\hat{s_1} x^{-\alpha}}\right) \right) \right] x \dd x \right)}\prod_{k=2}^{K}e^{\left(-2\pi\lambda_k{\left(\frac{s_k}{\mu}\right)}^{2/\alpha}\frac{\pi\rm{csc}(\frac{2\pi}{\alpha})}{\alpha}\right)}.\]

Substituting for the integration variables $s$ and de-conditioning on $r_1$, we have 
\begin{eqnarray}
\label{eq:ClosedFFRnum}
2 \pi r_1\lambda_1 \int_0^\infty e^{-\pi \lambda_1 {r_1}^2 \left(1+2\xi(T,{T_1},\alpha,\Delta)+2\sum_{k=2}^{K}\kappa_k\psi(\gamma_k{T_1},\alpha)\right)-\mu (T+{T_1}) \frac{\sigma^2}{P_1} {r_1}^{\alpha}} \dd r_1, 
\end{eqnarray}
\[\textrm{where}~\xi(T,{T_1},\alpha,\Delta)= \int_{r_1}^\infty\left[1-\frac{1}{1+{T_1}{r_1}^\alpha x^{-\alpha}}\left(1-\frac{1}{\Delta}\left(1-\frac{1}{1+T {r_1}^\alpha x^{-\alpha}}\right) \right) \right] x \dd x,\]
\[\textrm{and}~\psi(z,\alpha) = \rm{csc}\left(\frac{2\pi}{\alpha}\right)\frac{\pi {z}^{2/\alpha}}{\alpha}~,~\gamma_k = \frac{P_k}{P_1}~,~\kappa_k = \frac{\lambda_k}{\lambda_1}.\]

Now we focus on the denominator of \eqref{eq:PcClosedFFR}, using the independence of $I_1$ and $I_2$,...,$I_K$ we have
\begin{eqnarray}
&~&1-\E \left[\exp\left(-\mu \frac{{T_1}}{P_1} {r_1}^{\alpha}(\sigma^2 + P_1I_1 + \displaystyle\sum_{k=2}^K P_kI_k) \right)\right]\nonumber\\&=& 1-\E \left[e^{\left(-\mu \frac{{T_1}}{P_1} {r_1}^{\alpha}(\sigma^2 + P_1I_1) \right)}\right]\prod_{k=2}^{K}\E \left[e^{\left(-\mu \frac{{T_1}}{P_1} {r_1}^{\alpha}(P_kI_k) \right)}\right]\nonumber\\
&=& 1-2\pi r_1 \lambda_1 \int_0^\infty e^{-\pi\lambda_1 {r_1}^2\left(1+\rho({T_1},\alpha)+2\sum_{k=2}^{K}\kappa_k\psi(\gamma_k{T_1},\alpha)\right)} e^{-\mu (T+{T_1}) \frac{\sigma^2}{P_1} {r_1}^{\alpha}} \dd r_1.
\label{eq:ClosedFFRdenom}
\end{eqnarray}

The first term of the numerator represents the $\sinr$ on the newly allocated subband we have 
\begin{equation}
\label{eq:firstClosedFFR}
\pi\lambda_1 \int_0^\infty e^{-\pi \lambda_1 v \left(1+\frac{\rho({T},\alpha)}{\Delta}\right)-\mu T \frac{\sigma^2}{P_1} v^{\alpha/2}},
\end{equation}
since the received interference is only from the first tier APs due to the closed access frequency allocation for edge users and is originally given in \cite{Andrews2010}.

Thus plugging \eqref{eq:ClosedFFRnum} and \eqref{eq:ClosedFFRdenom} back into \eqref{eq:PcClosedFFR}, and substituting \eqref{eq:firstClosedFFR} for the first term of the numerator and substituting ${r_1}^2 = v$ we have \eqref{eq:FFRclosed}.

\section{\label{sec:AppB} Proof of SFR, Closed Access Theorem}
A macrocell connected user $y$ with $\sinr < {T_1}$ is assigned a FFR sub-band $\delta_y$, where $\delta \in \{1,...,\Delta\}$ with uniform probability $\frac{1}{\Delta}$, and experiences new fading power $\hat{g_1}$, transmit power $\beta P_1$, and out-of-cell interference. The CCDF of the edge user $\bar{\mathrm{F}}_{\mathrm{SFR,c}}(T)$ is now conditioned on its previous $\sinr$,
\begin{equation}
\label{eq:PcClosedSFR}
\bar{\mathrm{F}}_{\mathrm{SFR,e}}(T)=\mathbb{P}\left(\frac{\beta P_1\hat{g_1} {r_1}^{-\alpha}}{\sigma^2+\eta P_1\hat{I_1} + \sum_{k=2}^K P_k\hat{I_k}} > T ~\bigg|~ \frac{P_1g_1{r_1}^{-\alpha}}{\sigma^2+ \eta P_1I_1+\sum_{k=2}^K P_kI_k} < {T_1} \right).
\end{equation}
Using Bayes' rule as in Theorem \ref{thm:ClosedFFRedge} and focusing on the resulting numerator, since $\hat{g_1}$ and $g_1$ are i.i.d. exponentially distributed with mean $\mu$, this gives
\[ \E \left[e^{\left(-\mu \frac{T}{\beta P_1} {r_1}^{\alpha}(\sigma^2 + \eta P_1\hat{I_1} + \sum_{k=2}^K P_k\hat{I_k}) \right)}\right] - \E \left[e^{\left(-\mu \frac{T}{\beta P_1} {r_1}^{\alpha}(\sigma^2 + \eta P_1\hat{I_1} + \sum_{k=2}^K P_k\hat{I_k}) \right)}e^{\left(-\mu \frac{{T_1}}{P_1} {r_1}^{\alpha}(\sigma^2 + \eta P_1I_1 + \sum_{k=2}^K P_kI_k) \right)}\right],\] 

Now concentrating on the second term, factoring out terms corresponding to the independent noise power $\sigma^2$, and {\em conditioning on $r_1$}, we obtain the joint Laplace transform of 
$\hat{I_1}$, $\hat{I_2}$, ..., $\hat{I_K}$, and $I_1$, $I_2$, ..., $I_K$ given by 
\begin{equation}
\E \left[\prod_{z \in Z_1}\frac{\mu}{\mu+\hat{s_1}{R_{z}}^{-\alpha}}\frac{\mu}{\mu+s_1 {R_{z}}^{-\alpha}}\right]\prod_{k=2}^K\E \left[\prod_{z \in \hat{Z_k}}\frac{\mu}{\mu+\hat{s_k} {R_{z}}^{-\alpha}}\right]\E \left[\prod_{z \in Z_k}\frac{\mu}{\mu+s_k {R_{z}}^{-\alpha}}\right].
\label{eq:closedLSFR}
\end{equation}
%
Using the same method as Theorem \ref{thm:ClosedFFRedge} and de-conditioning on $r_1$ we obtain
\begin{eqnarray}
\label{eq:SFRclosed2ndTerm}
2\pi r_1\lambda_1 \int_0^\infty e^{-\pi \lambda_1 {r_1}^2\left(1+2\zeta(T,{T_1},\alpha,\Delta,\beta,\eta)+2\sum_{k=2}^K\kappa_k\left(\psi(\frac{\gamma_k}{\beta}T,\alpha)+\psi(\gamma_k{T_1},\alpha)\right)\right)-\mu (\frac{T}{\beta}+{T_1}) \frac{\sigma^2}{P_1} {r_1}^{\alpha}} \dd r_1 \nonumber, 
\end{eqnarray}
\[\textrm{where}~\zeta(T,{T_1},\alpha,\Delta,\beta,\eta)= \int_{r_1}^\infty\left[1- \frac{1}{1+\eta {T_1}{r_1}^{\alpha} x^{-\alpha}}\frac{1}{1+\frac{\eta}{\beta}T{r_1}^{\alpha}x^{-\alpha}} \right] x \dd x,\]
Using the same argument and analysis for the resulting denominator of \eqref{eq:PcClosedSFR} after Bayes' rule is applied we have
\begin{equation}
\label{eq:denomClosedSFR}
1-2\pi r_1 \lambda_1 \int_0^\infty e^{-\pi \lambda_1 {r_1}^2\left(1+\rho(\eta{T_1},\alpha)+2\sum_{k=2}^K\kappa_k\psi({T_1},\alpha)\right)} e^{-\mu (T+\eta{T_1}) \frac{\sigma^2}{P_1} {r_1}^{\alpha}} \dd r_1,
\end{equation}
Finally, the first term of the numerator is given as
\begin{equation}
\label{eq:numSFRclosed}
2\pi r_1\lambda_1 \int_0^\infty e^{-\pi \lambda_1 {r_1}^2\left(1+\rho(\frac{\eta}{\beta} T,\alpha)+2\sum_{k=2}^K\kappa_k\psi(\frac{\gamma_k}{\beta}T,\alpha)\right)} e^{-\mu (T) \frac{\sigma^2}{\beta P_1} {r_1}^{\alpha}} \dd r_1.
\end{equation}
Thus plugging \eqref{eq:SFRclosed2ndTerm}, \eqref{eq:denomClosedSFR}, and \eqref{eq:numSFRclosed} back into \eqref{eq:PcClosedSFR} and substituting ${r_1}^2 = v$ we have \eqref{eq:SFRclosed}.

\section{\label{sec:AppC} Proof of Strict FFR, Open Access Theorem}
A user $y$ with $\sir_1 < T_1$ when connected to the closes macrocell and $\sir_2 < T_2$ when connected to the closest microcell is given a FFR sub-band $\delta_y$, where $\delta \in \{1,...,\Delta\}$ with uniform probability $\frac{1}{\Delta}$, and experiences new fading power $\hat{g_1}$ and out-of-cell interference $P_1\hat{I_1}$. The CCDF of the edge user $\bar{\mathrm{F}}_{\mathrm{FFR,o}}(T)$ is now conditioned on its previous $\sir$ and $r_1$ and $r_2$, the distance to the nearest tier 1 and tier 2 AP respectively, given by
\begin{eqnarray}
&~&\mathbb{P}\left(\frac{P_1\hat{{g_1}} {r_1}^{-\alpha}}{P_1\hat{I_1}} > T ~\bigg|~ \frac{P_1{g_1}{r_1}^{-\alpha}}{P_1I_1+P_2I_2+P_2{g_2}{r_2}^{-\alpha}} < T_1 , \frac{P_2{g_2}{r_2}^{-\alpha}}{P_1I_1+P_2I_2 + P_1{g_1}{r_1}^{-\alpha}} < T_2 \right).
\label{eq:OpenStrictFFRBayes}
\end{eqnarray}
Using Bayes' rule and initially focusing on the denominator, the conditional term in \eqref{eq:OpenStrictFFRBayes}, and conditioning on $g_2$ gives
\[\mathbb{P}\left(\frac{{r_1}^{\alpha}}{P_1}\left(\frac{P_2}{T_2}{g_2}{r_2}^{-\alpha}-\left(P_1I_1+P_2I_2\right)\right) ~<~ {g_1} ~<~ 
T_1\frac{{r_1}^{\alpha}}{P_1}\left(P_1I_1+P_2I_2+P_2{g_2}{r_2}^{-\alpha}\right) ~\bigg|~ {g_2} \right)\mathbb{P}\left({g_2}\right).\]
Since ${g_1}$ and ${g_2}$ are i.i.d. exponentially distributed with mean $\mu$, and setting $\bar{I} = P_1I_1 + P_2I_2$, this gives
\begin{align*}
&~ \E_{{g_2}} \left[\int_{\frac{{r_1}^{\alpha}}{P_1}\left(\frac{P_2}{T_2}{g_2}{r_2}^{-\alpha}-\bar{I}\right)^{+}}^{T_1\frac{{r_1}^{\alpha}}{P_1}\left(\bar{I}+P_2{g_2}{r_2}^{-\alpha}\right)}\mu e^{-\mu x}dx\right] = \E_{{g_2}} \left[e^{-\mu\frac{{r_1}^{\alpha}}{P_1}\left(\frac{P_2}{T_2}{g_2}{r_2}^{-\alpha}-\bar{I}\right)^{+}}-e^{-\mu T_1\frac{{r_1}^{\alpha}}{P_1}\left(\bar{I}+P_2{g_2}{r_2}^{-\alpha}\right)}\right],
\end{align*}
\begin{displaymath}
   \textrm{where}~(x)^{+} = \left\{
     \begin{array}{lr}
       x & : x > 0\\
       0 & : x < 0
     \end{array}
   \right.
\end{displaymath}
Evaluating the expectation, collecting terms and simplifying gives,
\begin{eqnarray}
&=&1-\epsilon_1e^{-\bar{I}\mu T_1\frac{{r_1}^{\alpha}}{P_1}} -\epsilon_2e^{-\bar{I}\mu T_2\frac{{r_2}^{\alpha}}{P_2}},~\textrm{where}
\label{eq:denomsimpFFRclosed}
\end{eqnarray}
\[\gamma = \frac{P_2}{P_1} ~,~ \epsilon_1 = \left(\frac{1}{T_1\gamma\frac{{r_1}^{\alpha}}{{r_2}^{\alpha}}+1}\right), ~\textrm{and}~ \epsilon_2 = \left(\frac{1}{T_2\left(\gamma \frac{{r_1}^{\alpha}}{{r_2}^{\alpha}}\right)^{-1}+1}\right).\]

We observe that the expectation of \eqref{eq:denomsimpFFRclosed} with respect to $I_1$ and $I_2$ is the joint Laplace transform of $I_1$ and $I_2$ evaluated at $(\mu T_1 \frac{{r_1}^{\alpha}}{P_1}, \mu  T_2 \frac{{r_2}^{\alpha}}{P_2})$. The joint Laplace transform denoted by $g_d(r_1,r_2)$ is
\begin{align*}
g_d(r_1,r_2) &= \E_{I_1,I_2} \left[1-\epsilon_1 e^{-s_1\bar{I}}-\epsilon_2 e^{-s_2\bar{I}}\right]\\
&=1-\epsilon_1e^{\left(-2\pi\lambda_1\rho_{1,1}(T_1,\alpha)\right)} e^{\left(-2\pi\lambda_2\rho_{1,2}(\gamma T_1,\alpha)\right)}-\epsilon_2e^{\left(-2\pi\lambda_1\rho_{2,1}(T_2/\gamma,\alpha)\right)} e^{\left(-2\pi\lambda_2\rho_{2,2}(T_2,\alpha)\right)},
\end{align*}
where $\rho_{a,b}\left(z,\alpha\right)$ is given by \eqref{eq:rhoOpen}.
De-conditioning on $r_1$ and $r_2$, we have 
\begin{eqnarray}
\label{eq:DenomFFRopen}
&~&\int_{r_2=0}^{\infty}\int_{r_1=0}^{\infty}\left(2\pi\lambda_1r_1 e^{-\pi\lambda_1{r_1}^2}\right)\left(2\pi\lambda_2r_2e^{-\pi\lambda_2{r_2}^2}\right) g_d(r_1,r_2) \dd r_1 \dd r_2.
\end{eqnarray}

Now we turn our attention to the numerator which equals,
\[ \E \left[e^{\left(-\mu \hat{I_1} T{r_1}^{\alpha} \right)}\right] - \E \left[e^{\left(-\hat{I_1} \mu T {r_1}^{\alpha}\right)}\left(\epsilon_1e^{\left(-\bar{I}\mu T_1 \frac{{r_1}^{\alpha}}{P_1} \right)}+\epsilon_2e^{\left(-\bar{I} \mu T_2 \frac{{r_2}^{\alpha}}{P_2} \right)}\right)\right].\] 

Concentrating on the second term we observe that the expectation with respect to $\hat{I_1}$,$I_1$, and $I_2$ is the joint Laplace transform of $\hat{I_1}$, $I_1$, and $I_2$ evaluated at $(\mu T {r_1}^{\alpha},\mu T_1 \frac{{r_1}^{\alpha}}{P_1},\mu T_2 \frac{{r_2}^{\alpha}}{P_2})$. The joint Laplace transform $g_n(r_1,r_2):=\mathscr{L}_{\textrm{num}}\left(\mu T {r_1}^\alpha, \mu  T_1 \frac{{r_1}^\alpha}{P_1}, \mu T_2 \frac{{r_2}^\alpha}{P_2}\right)$ is
\[\E_{\hat{I_1},I_1,I_2} \left[\exp \left(-s_1\hat{I_1}\right)\left(\epsilon_1\exp\left(-s_2(P_1I_1+P_2I_2)\right)+\epsilon_2\exp\left(-s_3(P_1I_1+P_2I_2)\right)\right)\right].\]
Expanding the terms and applying a similar approach as before we have
\begin{align*}
g_n(r_1,r_2)
= \epsilon_1e^{-2\pi \left( \lambda_1 \xi_{1,1}\left(T,T_1,\alpha,\Delta\right)+ \lambda_2 \rho_{1,2}\left(T_1,\alpha\right)\right)} + \epsilon_2e^{-2\pi \left(\lambda_1 \xi_{2,1}\left(T,T_2/\gamma,\alpha,\Delta\right) + \lambda_2 \rho_{2,2}\left(T_2,\alpha\right)\right)},
\end{align*}
where $\xi_{a,b}\left(T,z,\alpha,\Delta\right)$ is given by \eqref{eq:xiClosed}.
De-conditioning on $r_1$ and $r_2$,  
\begin{equation}
\label{eq:OpenFFRnum}
\int_{r_2=0}^{\infty}\int_{r_1=0}^{\infty}\left(2\pi\lambda_1r_1 e^{-\pi\lambda_1{r_1}^2}\right)\left(2\pi\lambda_2r_2e^{-\pi\lambda_2{r_2}^2}\right) g_n(r_1,r_2) \dd r_1 \dd r_2.
\end{equation}

Finally, plugging \eqref{eq:ClosedFFRnum} and \eqref{eq:DenomFFRopen} into \eqref{eq:OpenStrictFFRBayes}, and substituting \eqref{eq:firstClosedFFR} for the first term of the numerator by definition and ${r_1}^2 = v$ we have \eqref{eq:OpenBayesStrictFFR}.
 
\section{\label{sec:AppD} Proof of SFR, Open Access Theorem}
A user $y$ with $\sir_1 < T_1$ and $\sir_2 < T_2$ is given a FFR sub-band with uniform probability $\frac{1}{\Delta}$, and experiences new fading power $\hat{{g_1}}$, transmit power $\beta P_1$, and out-of-cell interference $\bar{I} = \eta P_1I_1 + P_2I_2$. The CCDF of the edge user $\bar{\mathrm{F}}_{\mathrm{SFR,o}}(T)$ is now given by
\begin{equation}
\mathbb{P}\left(\frac{\beta P_1\hat{{g_1}} {r_1}^{-\alpha}}{\eta P_1\hat{I_1}+P_2\hat{I_2}} > T ~\bigg|~ \frac{P_1{g_1}{r_1}^{-\alpha}}{\bar{I}+P_2{g_2}{r_2}^{-\alpha}} < T_1,\frac{P_2{g_2}{r_2}^{-\alpha}}{\bar{I} + P_1{g_1}{r_1}^{-\alpha}} < T_2 \right)
\label{eq:OpenSFR}
\end{equation}
Using the method of Theorem \ref{thm:OpenFFRedge}, applying Bayes' rule we have the joint Laplace transform of $I_1$ and $I_2$ given $r_1$ and $r_2$,
\begin{align*}
&f_d(r_1,r_2)=
1-\epsilon_1e^{\left(-2\pi\lambda_1\left(\rho_{1,1}(\eta T_1,\alpha)+ \kappa\rho_{1,2}(\gamma T_1,\alpha)\right)\right)}-\epsilon_2e^{\left(-2\pi\lambda_1\left(\rho_{2,1}\left(\frac{\eta}{\gamma}T_2,\alpha\right)+\kappa\rho_{2,2}\left(T_2,\alpha\right)\right)\right)},
\end{align*}
\[\textrm{where}~\gamma = \frac{P_2}{P_1} ~,~ \epsilon_1 = \left(\frac{1}{T_1\gamma\frac{{r_1}^{\alpha}}{{r_2}^{\alpha}}+1}\right)~,~\epsilon_2 = \left(\frac{1}{T_2\left(\gamma \frac{{r_1}^{\alpha}}{{r_2}^{\alpha}}\right)^{-1}+1}\right),\textrm{and}~\rho_{a,b}\left(z,\alpha\right)~\textrm{given by \eqref{eq:rhoOpen}}.\]

De-conditioning on $r_1$ and $r_2$, we have 
\begin{eqnarray}
\label{eq:OpenSFRdenom}
&~&\int_{r_2=0}^{\infty}\int_{r_1=0}^{\infty}\left(2\pi\lambda_1r_1 e^{-\pi\lambda_1{r_1}^2}\right)\left(2\pi\lambda_2r_2e^{-\pi\lambda_2{r_2}^2}\right) f_d(r_1,r_2) \dd r_1 \dd r_2.
\end{eqnarray}
Again, following the method of Theorem \ref{thm:OpenFFRedge}, we observe that the numerator of \eqref{eq:OpenSFR} is given by 
\begin{eqnarray}
\label{eq:numOpenSFR}
&~&\pi \lambda_1 \int_0^\infty e^{-\pi \lambda_1 v\left(1+\rho(\frac{\eta}{\beta} T,\alpha)+2\kappa\psi(\frac{\gamma}{\beta}T,\alpha)\right)} \dd v \nonumber\\
&-& \int_{r_2=0}^{\infty}\int_{r_1=0}^{\infty}\left(2\pi\lambda_1r_1 e^{-\pi\lambda_1{r_1}^2}\right)\left(2\pi\lambda_2r_2e^{-\pi\lambda_2{r_2}^2}\right) f_n(r_1,r_2) \dd r_1 \dd r_2,
\end{eqnarray}
\begin{align*}
\textrm{where}~ f_n(r_1,r_2)= &\epsilon_1e^{-2\pi\lambda_1 \left(\zeta_{1,1}\left(T,T_1,\alpha,\Delta,\beta,\eta\right)+\kappa\psi\left(\frac{\gamma}{\beta}T,\alpha\right)+ \kappa \rho_{1,2}\left(\gamma T_1,\alpha\right)\right)}\\
&+ \epsilon_2e^{-2\pi \lambda_1 \left(\zeta_{2,1}\left(T,T_2/\gamma,\alpha,\Delta,\beta,\eta\right)+\kappa \psi\left(\frac{\gamma}{\beta}T,\alpha\right)+\kappa \rho_{2,2}\left(T_2,\alpha\right)\right)}.
\end{align*}
Thus plugging \eqref{eq:numOpenSFR} and \eqref{eq:OpenSFRdenom} back into \eqref{eq:OpenSFR} and substituting ${r_1}^2 = v$ we have \eqref{eq:SFRopenFinal}.
\linespread{1.15}
\bibliographystyle{IEEEtran}
\bibliography{FemtoFFRbib}

\newpage

\begin{figure}
	\centering
	\includegraphics[width=4.5in]{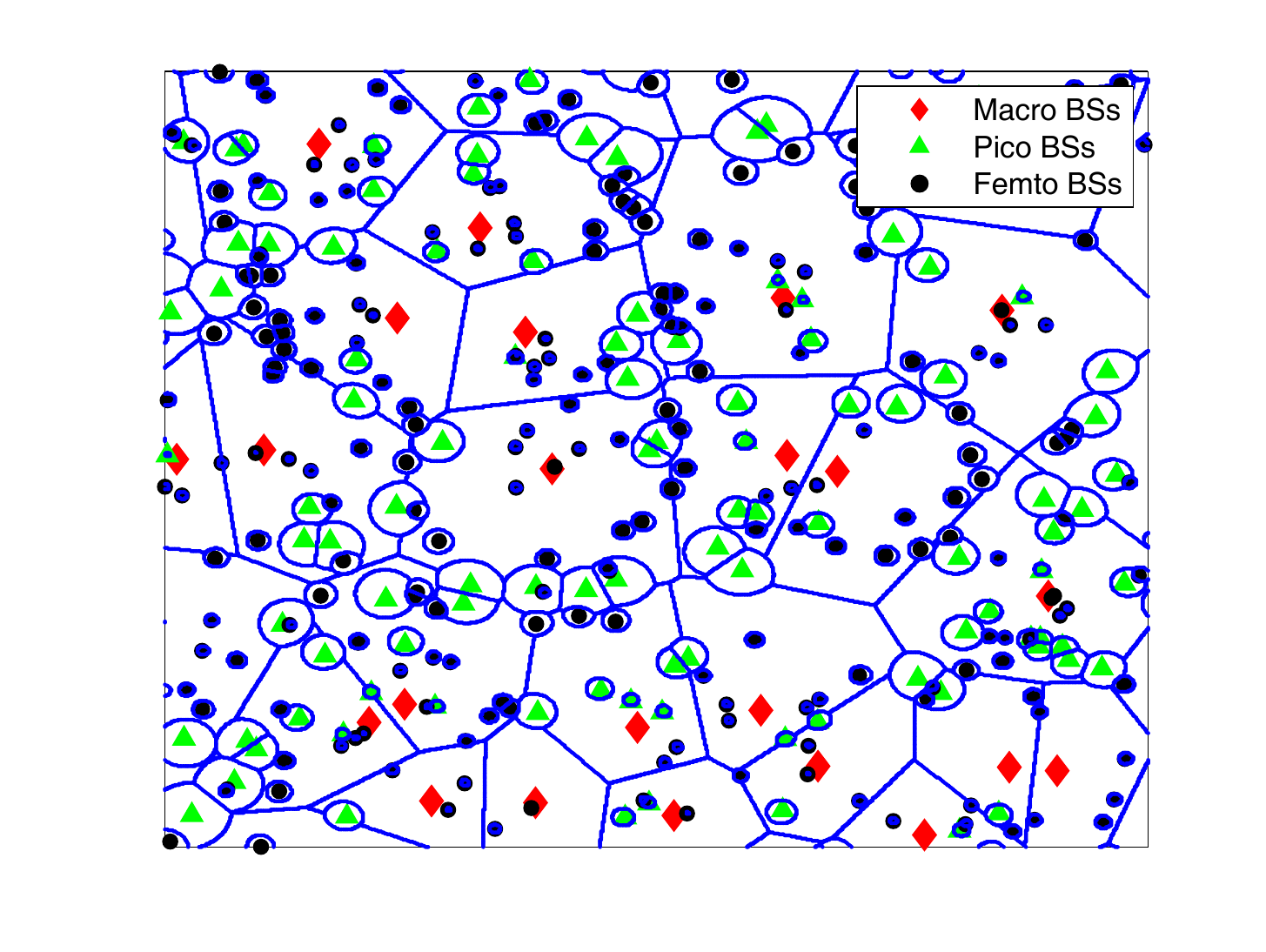}
	\caption{A realization of a Poisson distributed three-tier cellular network with coverage regions defined by the highest received power.}
   \label{fig:multiTier}
\end{figure}

\begin{figure}
	\centering
	\includegraphics[width=4in]{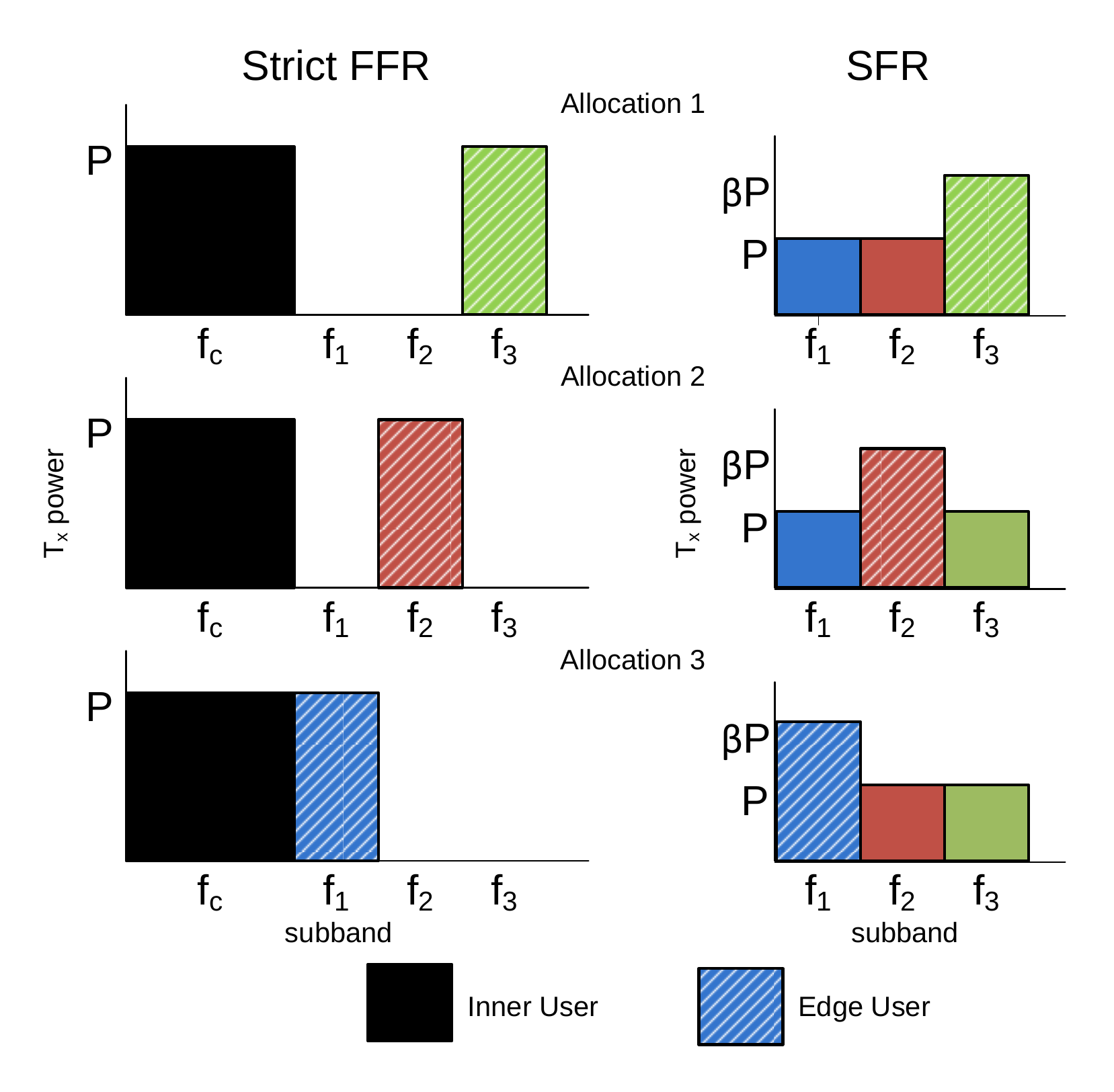}
	\caption{Strict FFR (left) and SFR (right) subband and transmit power allocations with $\Delta=3$ cell-edge reuse factor.}
	\label{fig:FFRmodels}
\end{figure}

\begin{figure}
	\centering
	\includegraphics[width=5in]{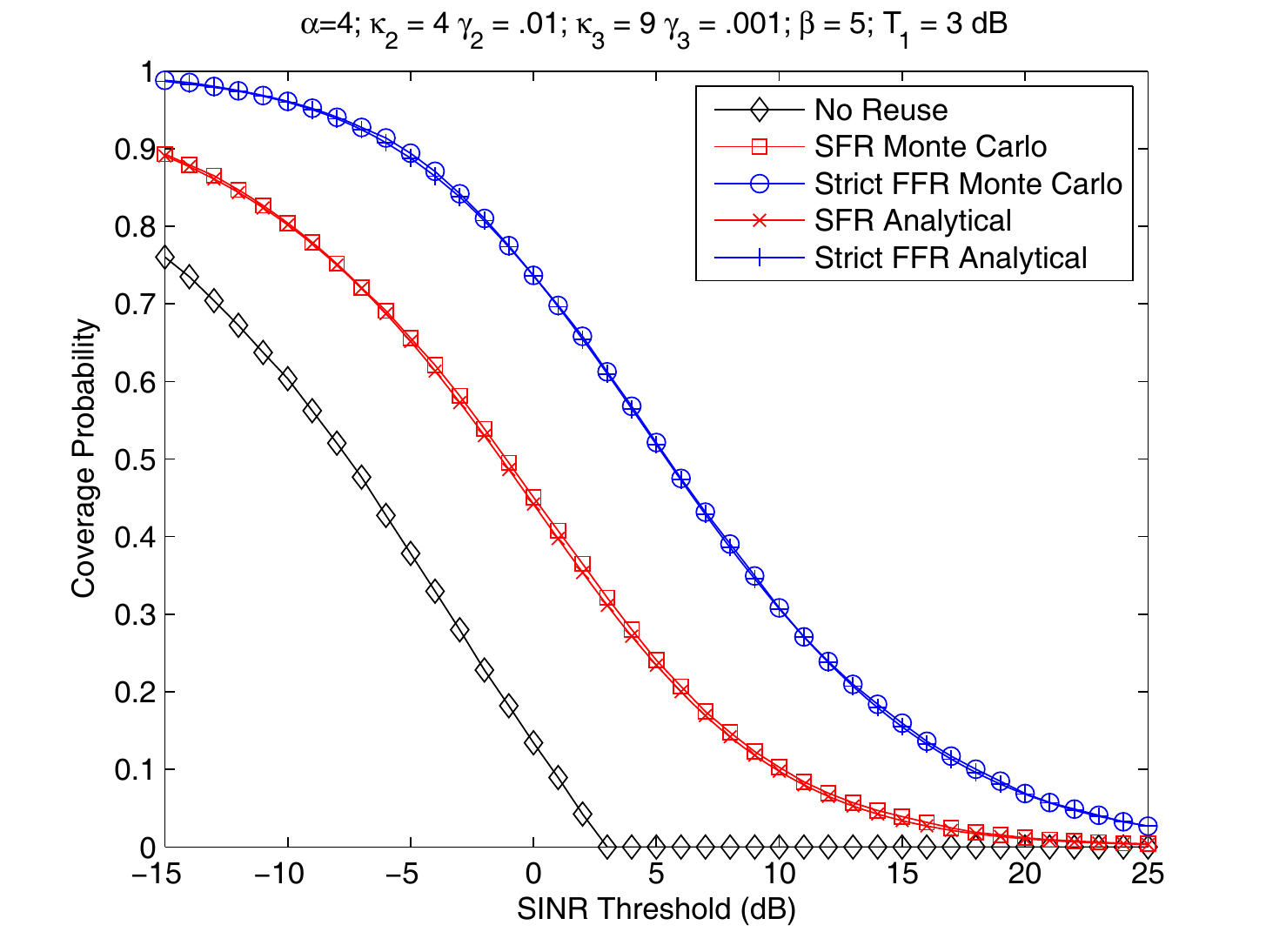}
	\caption{Downlink edge user $\sinr$ distributions for closed access with three tiers of APs.}
	\label{fig:ClosedEdge}
\end{figure}

\begin{figure}
	\centering
	\includegraphics[width=5in]{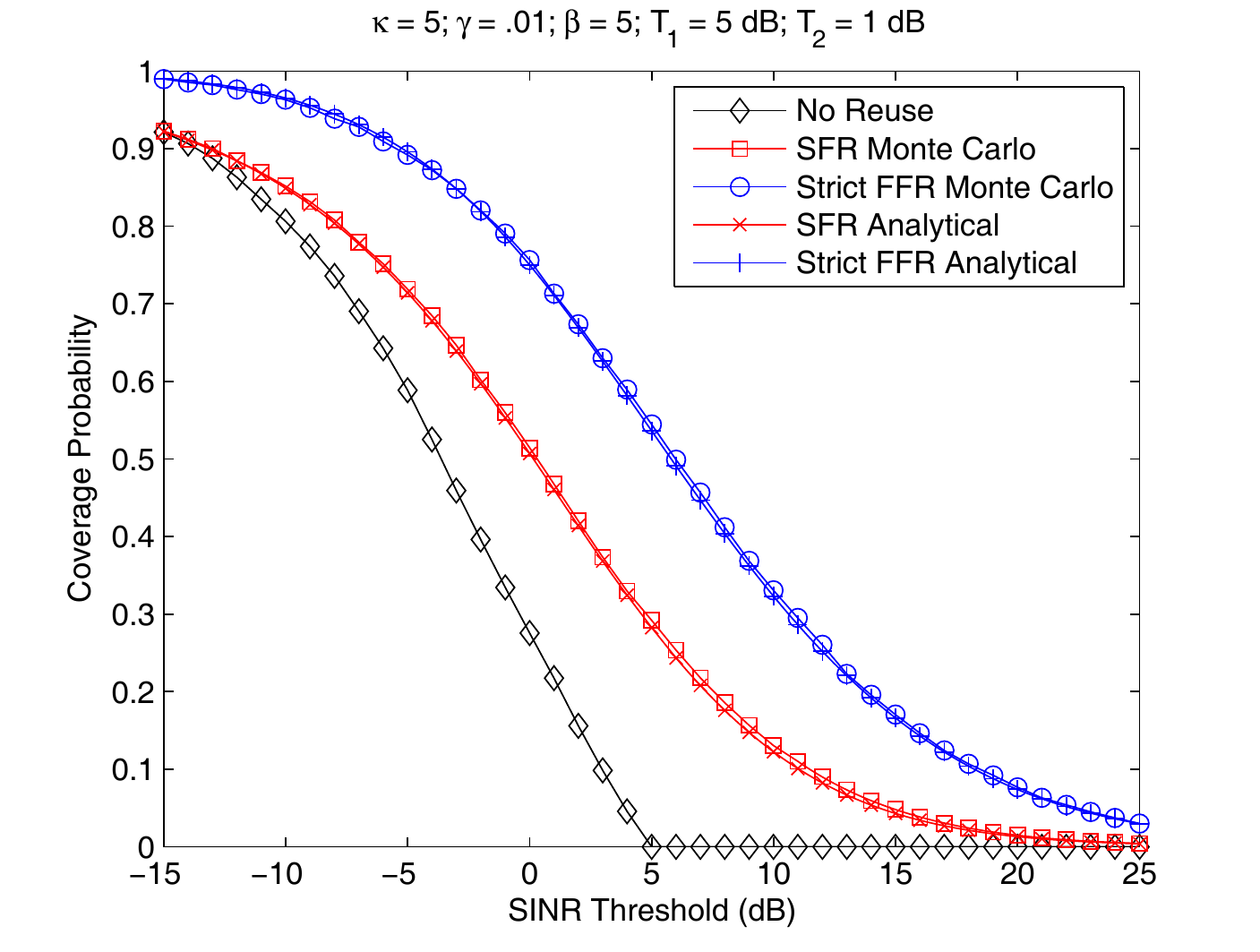}
	\caption{Downlink edge user $\sinr$ distributions for open access with two tiers of APs.}
	\label{fig:OpenEdge}
\end{figure}

\begin{figure}
	\centering
	\includegraphics[width=5in]{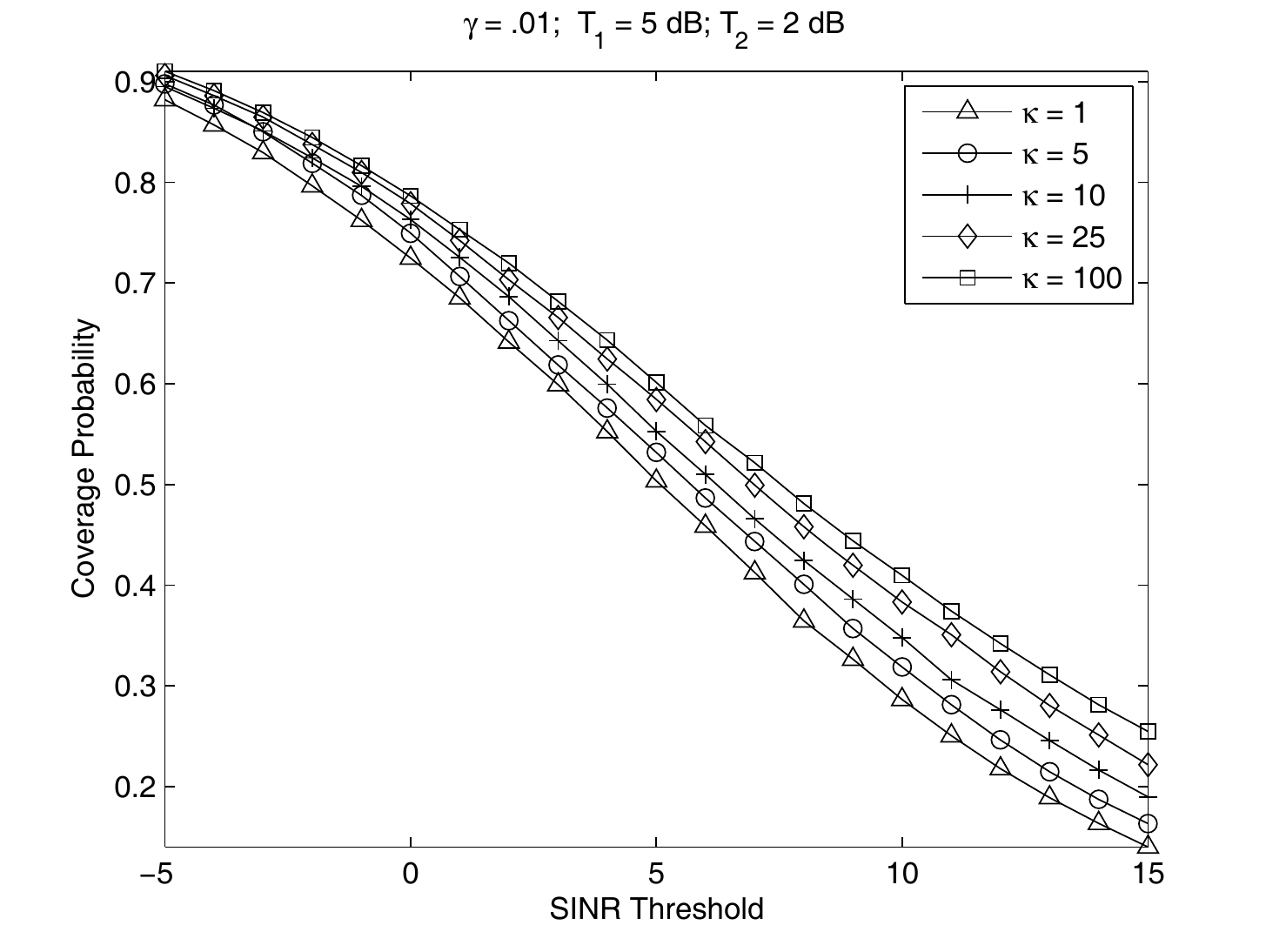}
	\caption{Downlink edge user $\sinr$ distributions for Strict FFR and closed access as a function of the tier density ratio $\kappa$.}
	\label{fig:MTkappa}
\end{figure}

\begin{figure}
	\centering
	\includegraphics[width=5in]{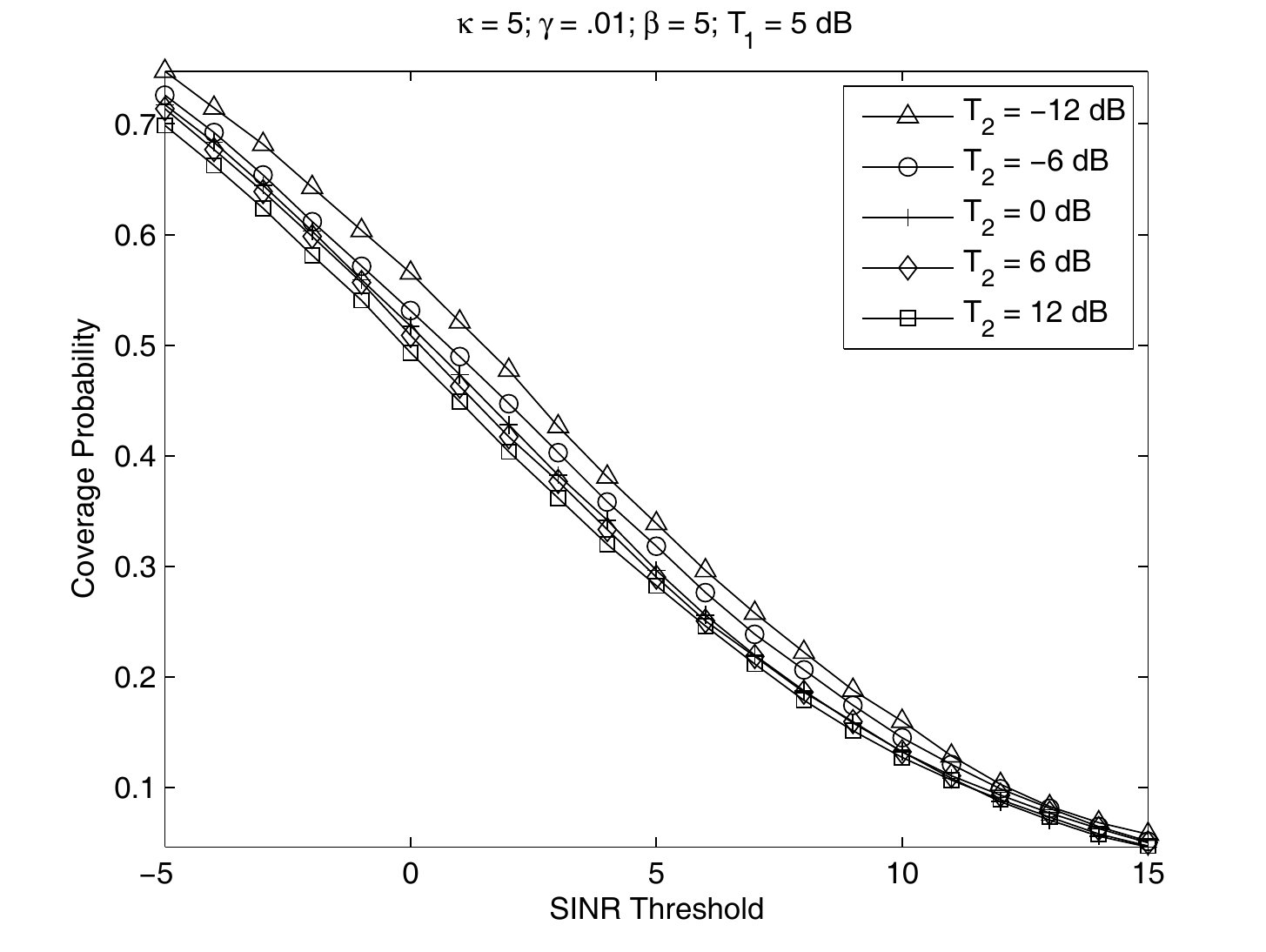}
	\caption{Downlink edge user $\sir$ distributions for SFR and open access as a function of $T_2$.}
	\label{fig:MTt2}
\end{figure}

\end{document}